\def\doctype{2}
\def\tsubmission{1}
\newcommand{\full}[1]{}
\newcommand{\submit}[1]{#1}
\newcommand{\full}[1]{#1}
\newcommand{\submit}[1]{}
\newtheorem{lemma}{Lemma}
\newtheorem{theorem}{Theorem}
\newtheorem{claim}{Claim}
\newtheorem{definition}{Definition}
\newcommand{\meach}{\text{\textnormal{\textbf{each} }}}
\newcommand{\Sec}[1]{\hyperref[sec:#1]{\S\ref*{sec:#1}}} 
\newcommand{\Eqn}[1]{\hyperref[eqn:#1]{(\ref*{eqn:#1})}} 
\newcommand{\Fig}[1]{\hyperref[fig:#1]{Fig.\,\ref*{fig:#1}}} 
\newcommand{\Tab}[1]{\hyperref[tab:#1]{Tab.\,\ref*{tab:#1}}} 
\newcommand{\Thm}[1]{\hyperref[thm:#1]{Thm.\,\ref*{thm:#1}}} 
\newcommand{\Lem}[1]{\hyperref[lem:#1]{Lem.\,\ref*{lem:#1}}} 
\newcommand{\Prop}[1]{\hyperref[prop:#1]{Prop.~\ref*{prop:#1}}} 
\newcommand{\Cor}[1]{\hyperref[cor:#1]{Cor.~\ref*{cor:#1}}} 
\newcommand{\Def}[1]{\hyperref[def:#1]{Defn.~\ref*{def:#1}}} 
\newcommand{\Alg}[1]{\hyperref[alg:#1]{Alg.\,\ref*{alg:#1}}} 
\newcommand{\Ex}[1]{\hyperref[ex:#1]{Ex.~\ref*{ex:#1}}} 
\newcommand{\Clm}[1]{\hyperref[clm:#1]{Claim~\ref*{clm:#1}}} 
\newcommand{\Step}[1]{\hyperref[step:#1]{Step~\ref*{step:#1}}} 
\newcommand{\cG}{\mathcal{G}}
\newcommand{\cH}{\mathcal{H}}
\newcommand{\cS}{\mathcal{S}}
\newcommand{\cc}{cc}
\newcommand{\ct}{ct}
\newcommand{\degk}{\delta}
\newcommand{\setk}{{\tt set-k}}
\newcommand{\kval}{\kappa}
\newcommand{\rt}{RT}
\newcommand{\fb}{{\tt facebook}\xspace}
\newcommand{\epinion}{{\tt soc-sign-epinions}\xspace}
\newcommand{\notredame}{{\tt web-NotreDame}\xspace}
\newcommand{\wiki}{{\tt wikipedia-200611}\xspace}
\newcommand{\ignore}[1]{}
\begin{document}

\title{Finding the Hierarchy of Dense Subgraphs \\ using Nucleus Decompositions
}

\numberofauthors{4} 
%

\author{
\fontsize{11}{11}\selectfont
        Ahmet Erdem Sar{\i}y\"{u}ce{\small$^{\dag}$}\titlenote{Work done while the author was interning at Sandia National Laboratories, Livermore, CA.}, 
        C. Seshadhri{\small$^{\ddag}$}, 
        Ali P{\i}nar{\small$^{\mathsection}$}, 
        \"{U}mit V. \c{C}ataly\"{u}rek{\small$^{\dag}$}\\%
\fontsize{8}{8}\ttfamily\upshape\selectfont
         sariyuce.1@osu.edu, 
         scomandu@ucsc.edu,
         apinar@sandia.gov,
         umit@bmi.osu.edu\\%
\fontsize{10}{10}\itshape\selectfont
        $^{\dag}$The Ohio State University, Columbus, OH, USA\\
\fontsize{10}{10}\itshape\selectfont
        $^{\ddag}$University of California, Santa Cruz, CA, USA\\
\fontsize{10}{10}\itshape\selectfont
        $^{\mathsection}$Sandia National Labs, Livermore, CA, USA\\
}

\maketitle
\begin{abstract}

Finding dense substructures in a graph is a fundamental graph mining operation, with applications
in bioinformatics, social networks, and visualization to name a few. Yet most standard
formulations of this problem (like clique, quasi-clique, k-densest subgraph) are NP-hard. 
Furthermore, the goal  is rarely to find the ``true optimum", but to identify many (if not all) dense
substructures, understand their distribution in the graph, and ideally determine
relationships among them. Current dense subgraph finding algorithms usually 
optimize some objective, and only find a few such subgraphs without providing any structural
relations.

We define the \emph{nucleus decomposition} of a graph, which represents the graph
as a \emph{forest of nuclei}. Each nucleus is a subgraph where smaller cliques
are present in many larger cliques. The forest of nuclei is a hierarchy by containment,
where the edge density increases as we proceed towards leaf nuclei. Sibling nuclei
can have limited intersections, which enables discovering overlapping dense subgraphs.
With the right parameters, the nucleus decomposition generalizes the classic notions of $k$-cores and $k$-truss decompositions.

We give provably efficient algorithms for nucleus decompositions, and empirically
evaluate their behavior in a variety of real graphs. The tree of nuclei consistently 
gives a global, hierarchical snapshot of dense substructures, and
outputs dense subgraphs of higher quality than other state-of-the-art solutions.
Our algorithm can process graphs with tens of millions of edges in less than an hour. 

\end{abstract}

\category{F.2.2}{Nonnumerical Algorithms and Problems}{Computations on Discrete Structures}
\category{G.2.2}{Graph Theory}{Graph Algorithms}

\terms{Algorithms}

\keywords{$k$-core, $k$-truss, graph decomposition, density hierarchy, overlapping dense subgraphs, dense subgraph discovery} 

\section{Introduction} \label{sec:intro}

Graphs are widely used to model relationships in a wide variety of domains 
such as sociology, bioinformatics, infrastructure, the WWW, to name a few.
One of the key observations is that while real-world graphs are often globally
sparse, they are locally dense. In other words, the average degree is often
quite small (say at most 10 in a million vertex graph), but vertex
neighborhoods are often dense. The classic notions of transitivity~\cite{WaFa94} and clustering coefficients~\cite{WaSt98}
measure these densities, and are high for many real-world graphs~\cite{SaCaWiZa10,SePiKo13-tri-j}.

Finding dense subgraphs is a critical aspect of graph mining~\cite{Lee10}.
It has been used for finding communities and spam link farms in web graphs~\cite{Kumar99, Gibson05, Dourisboure07}, graph
visualization~\cite{Alvarez06}, real-time story identification~\cite{Angel12},
DNA motif detection in biological networks~\cite{Fratkin06}, finding correlated
genes~\cite{Zhang05}, epilepsy prediction~\cite{Iasemidis03}, finding price value motifs in financial
data~\cite{Du09}, graph compression~\cite{Buehrer08}, distance query indexing~\cite{Jin09}, and
increasing the throughput of social networking site servers~\cite{Gionis13}.
This is closely related to the classic sociological notion of group cohesion~\cite{BeCo+03,Fo10}.
There are tangential connections to classic community detection, but the objectives are significantly different.
Community definitions involve some relation of inner versus outer connections, while
dense subgraphs purely focus on internal cohesion.

\subsection{The challenges of dense subgraphs} \label{sec:dense}

Our input is a graph $G = (V,E)$. For vertex set $S$, we use $E(S)$ to denote the set of edges internal to $S$.
The \emph{edge density} of $S$ is $\rho(S)=|E(S)|/{|S|\choose 2}$, the fraction
of edges in $S$ with respect to the total possible. The aim is to find
a set $S$ with high density subject to some  size constraint. Typically, we are looking
for large sets of high density. 

In general, one can define numerous formulations that capture
the main problem. The maximum clique problem
is finding the largest $S$ where $\rho(S) = 1$. Finding the densest $S$
of size at least $k$ is the $k$-densest subgraph problem. Quasi-cliques, as defined
recently by Tsourakakis et al.~\cite{Tsourakakis13}, are sets that are almost cliques,
up to some fixed ``defect." Unfortunately, most formulations of finding dense subgraphs are NP-hard, even to approximate~\cite{Hastad96,Fe02,Kh06}.

For graph analysis, one rarely looks for just a single (or the optimal, for whatever notion) dense subgraph. We want to find
many dense subgraphs and understand the relationships among them. Ideally, we would like to see if they nest within each other,
if the dense subgraphs are concentrated in some region, and if they occur at various scales of size and density. 
Our paper is motivated by the following questions.
\begin{asparaitem}
	\item How do we attain a global, hierarchical representation of many dense subgraphs in a real-world graph?
	\item Can we define an efficiently solvable objective that directly provides \emph{many} dense subgraphs? We wish
	to avoid heuristics, as they can be difficult to predict formally.
\end{asparaitem} 

\begin{figure}[t]
  \centering 
	\includegraphics[width=0.4\textwidth,keepaspectratio]{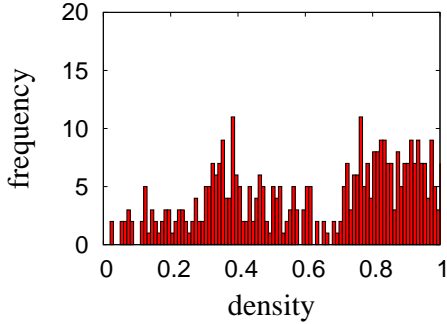}
   \caption{Density histogram of \fb~$(3,4)$-nuclei. $145$ nuclei have density of 
at least $0.8$ and $359$ nuclei are with the density of more than $0.25$.}
   \label{fig:fb-density}
\end{figure}

\begin{figure}[t]
  \centering 
	\includegraphics[width=0.4\textwidth,keepaspectratio]{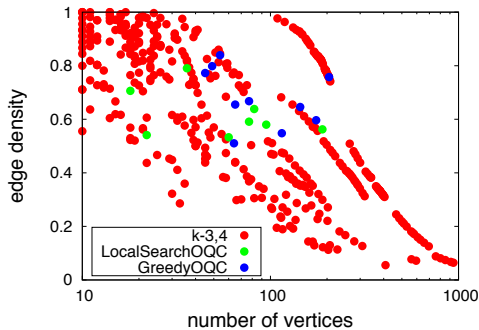}
   \caption{Size vs. density plot for \fb~$(3,4)$-nuclei. $50$ nuclei are larger than 
   $30$ vertices with the density of at least $0.8$.  There are also
   $138$ nuclei larger than $100$ vertices with density of at last $0.25$.}
   \label{fig:fb-comp}
\end{figure}

\begin{figure}[!]
\centering
\begin{minipage}{\linewidth}
\captionsetup{type=subfigure}
 \includegraphics[width=\linewidth]{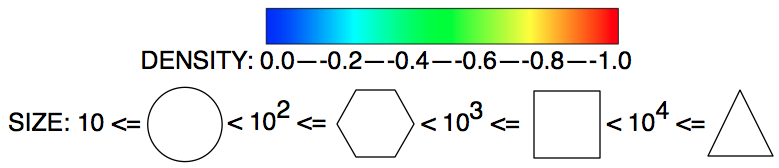}
\end{minipage}
\begin{minipage}{\linewidth}
\captionsetup{type=subfigure}
 \includegraphics[width=\linewidth]{facebook_34.pdf}
\end{minipage}
\caption{$(3,4)$-nuclei forest for \fb. Legends for densities and sizes are shown
at the top. Long chain paths are contracted to single 
edges. In the uncontracted forest, there are 47 leaves and 403 nuclei. Branching depicts the 
different regions in the graph, 13 connected components exist in the top level. Sibling nuclei
have limited overlaps up to 7 vertices.
}
\label{fig:fb-forest}
\end{figure}

\subsection{Our contributions} \label{sec:results}

\textbf{Nucleus decompositions:} Our primary theoretical contribution is the notion
of \emph{nuclei} in a graph. Roughly speaking,
an $(r,s)$-nucleus, for fixed (small) positive integers $r < s$, is a maximal
subgraph where every $r$-clique is part of many $s$-cliques.
(The real definition is more technical and involves some connectivity properties.)
Moreover, nuclei that do not contain one another cannot share an $r$-clique.
This is inspired by and is a generalization of the classic notion of $k$-cores,
and also $k$-trusses (or triangle cores).

We show that the $(r,s)$-nuclei (for any $r < s$) form a hierarchical decomposition
of a graph. The nuclei are progressively denser as we go towards the leaves in the decomposition.
We provide an exact, efficient algorithm that finds all the nuclei and builds the hierarchical
decomposition. In practice, we observe that $(3,4)$-nuclei provide the most interesting
decomposition. We find the $(3,4)$-nuclei for a large variety of
more than 20 graphs. Our algorithm is feasible in practice, and we are able to process
a 39 million edge graph in less than an hour (using commodity hardware).
The source code of our algorithms are available~\footnote{http://bmi.osu.edu/hpc/software/nucleus}.

\textbf{Dense subgraphs from $(3,4)$-nuclei:} The $(3,4)$-nuclei
provide a large set of dense subgraphs for range of densities and sizes.
For example, there are $403$ $(3,4)$-nuclei (of size at least 10 vertices)
in a \fb network of 88K edges. We show the density histogram of these nuclei
in \Fig{fb-density}, plotting the number of nuclei with a given density.
Observe that we get numerous dense subgraphs, and many with density fairly close to $1$. 
In \Fig{fb-comp}, we present a scatter plot of vertex size vs density
of the $(3,4)$-nuclei. Observe that we obtain dense subgraphs over a wide range of sizes.
For comparison, we also plot the output of recent dense subgraph algorithms
from Tsourakakis et al.~\cite{Tsourakakis13}. (These are arguably the state-of-the-art.
More details in next section.) Observe that $(3,4)$-nuclei give dense subgraphs
of comparable quality. In some cases, the output of~\cite{Tsourakakis13} is very close
to a $(3,4)$-nucleus.

\textbf{Representing a graph as forest of $(3,4)$-nuclei:} We build 
the forest of $(3,4)$-nuclei for all graphs experimented on. An example output is that of \Fig{fb-forest}, 
the forest of $(3,4)$-nuclei for the \fb{} network. Each node of the forest is a $(3,4)$-nucleus,
and tree edges indicate containment. More generally, an ancestor nucleus contains
all descendant nuclei. By the properties of $(3,4)$-nuclei, any two incomparable
nodes do not share a triangle. So the branching in the forest represents different
regions of the graph. (All nuclei of less than 10 vertices are omitted.
For presentation, we contract long chain paths in the tree to single edges, so the forest
has less than $403$ nodes.) 

In the nuclei figures, densities are color-coded, with hotter colors indicating higher density. The log of sizes
are coded by shape (circles comprise between 10 and 100 vertices, hexagons between 100 and 1000 vertices, etc.)
For a fixed shape, relative size corresponds to relative size in number of vertices.
We immediately see the hierarchy of dense structures. Observe the colors becoming hotter
as we go towards to leaves, which are mostly red (density $> 0.8$). We see numerous hexagons and large
circles of color between light blue to green. These indicate the larger parent subgraphs of moderate
density (actually density of say $0.25$ is fairly high for a subgraph having many hundreds of vertices).

The branching is also significant, and we can group together the dense subgraphs according to the hierarchy.
We observe such branching in all our experiments, and show more such results later in the paper.
The $(3,4)$-nuclei provide a simple, hierarchical visualization of dense substructures.
They are well-defined and their exact computation is  algorithmically feasible and practical.

We also want to emphasize the overlap between sibling nuclei. While sibling nuclei cannot share triangles, 
they can share edges, thus vertices. We observe roughly 20 pairs of $(3,4)$-nuclei having intersections of 4-6 vertices. 
For larger graphs, we observe many more pairs of intersecting nuclei (with larger intersections).

The rest of the paper is organized as follows: \Sec{prev} summarizes the related work, \Sec{nucleus}
introduces the main definitions and the lemma about the nucleus decomposition, \Sec{generating} gives
the algorithm to generate a nucleus decomposition and provides a complexity analysis, \Sec{experiments} contains the
results of extensive experiments we have, and \Sec{future} concludes the paper by discussing the future directions.

\section{Previous work} \label{sec:prev}

\textbf{Dense subgraph algorithms:} As discussed earlier, most formulations of the densest subgraph problem are NP-hard. Some variants
such as maximum average degree~\cite{Goldberg84, Gallo89} and the recently defined triangle-densest subgraph~\cite{Tsourakakis14} are polynomial time solvable.
Linear time approximation algorithms have been provided by Asashiro et al.~\cite{Asashiro00}, Charikar~\cite{Charikar00},
and Tsourakakis~\cite{Tsourakakis14}.
There are numerous recent practical algorithms for various such objectives: Andersen and Chellapilla's
use of cores for dense subgraphs~\cite{AnCh09}, Rossi et al.'s
heuristic for clique~\cite{Rossi13}, Tsourakakis et al.'s notion of quasi-cliques~\cite{Tsourakakis13}.
These algorithms are extremely efficient and produce excellent output. 
For comparison's sake, we consider Tsourakakis et al.~\cite{Tsourakakis13} as the state-of-the-art, which was compared with previous core-based heuristics and is much superior to prior art. Indeed, their algorithms
are elegant, extremely efficient, and provide high quality output (and much faster
than ours. More discussion in \Sec{runtime}).
These methods are tailored to finding one (or a few) dense subgraphs, and do not give a global/hierarchical view of the structure of dense
subgraphs. We believe it would be worthwhile to relate their methods with our notion of nuclei, to design
even better algorithms.

\textbf{$k$-cores and $k$-trusses:} The concepts of $k$-cores and $k$-trusses form the inspiration for our work.
A $k$-core is a maximal subgraph where each vertex has  minimum degree $k$, while a $k$-truss is a subgraph where
each edge participates in at least $k$ triangles. 
The first definition of $k$-cores was given by Erd\H{o}s and Hajnal~\cite{ErHa66}.
It has been rediscovered numerous times in the context of graph orientations and is alternately called the coloring number and degeneracy~\cite{LiWh70,kcore}.
The first linear time algorithm for computing $k$-cores was given by Matula and Beck~\cite{MaBe83}.
The earliest applications of cores to social networks was given by Seidman~\cite{kcore}, and it 
is now a standard tool in the analysis of massive networks.
The notions of $k$-truss or triangle-cores were independently proposed by 
Cohen~\cite{ktruss}, Zhang and Parthasarathy~\cite{Zhang12}, and Zhao and Tung~\cite{Zhao13}
for finding clusters and for network visualization. They all provide efficient algorithms
for these decompositions, and Cohen~\cite{ktruss} and Wang and Cheng~\cite{WaCh12} explicitly focus on massive scale. 
In~\cite{Wang10}, Wang et al. proposed
DN-graph, a similar concept to $k$-truss, where each edge should be involved in $k$ triangles, and
adding or removing a vertex from DN-graph breaks this constraint.
Apart from the $k$-core and $k$-truss definitions, $k$-plex and $k$-club
subgraph definitions have drawn a lot of interest as well. In a $k$-plex subgraph, each vertex is connected to all but at most  $k-1$ other vertices~\cite{kplex}, which complements the  $k$-core definition. In a $k$-club subgraph, the
shortest path from any vertex to other vertex is not more than $k$~\cite{kclub}.
All these methods find subgraphs of moderate density, and give a global decomposition to visualize a graph.

\section{Nucleus decomposition} \label{sec:nucleus}

Our main theoretical contribution is the notion of nucleus decompositions.
We have an undirected, simple graph $G$.
We use $K_r$ to denote an $r$-clique and start with some technical definitions.

\begin{definition} \label{def:cl-path} Let $r < s$ be positive integers and $\cS$ be a set of $K_s$s in $G$.
\begin{asparaitem}
	\item $K_r(\cS)$ the set of $K_r$s contained in some $S \in \cS$.
	\item The number of $S \in \cS$ containing $R \in K_r(\cS)$ is the \emph{$\cS$-degree} of that $K_r$.
	\item Two $K_r$s $R, R'$ are \emph{$\cS$-connected} if there exists a sequence $R = R_1, R_2, \ldots, R_k = R'$ in $K_r(\cS)$
	such that for each $i$, some $S \in \cS$ contains $R_i \cup R_{i+1}$.
\end{asparaitem}
\end{definition}

These definitions are generalizations of the standard notion of the vertex degree
and connectedness. Indeed, setting $r=1$ and $s=2$ (so $\cS$ is a set of edges) 
yields exactly that. Our main definition is as follows. 

\begin{definition} \label{def:nucleus} Let $k$, $r$, and $s$ be positive integers such that $r < s$.  A 
\emph{$k$-$(r,s)$-nucleus} is a maximal union $\cS$ of $K_s$s such that:
\begin{asparaitem}
	\item The $\cS$-degree of any $R \in K_r(\cS)$ is at least $k$.
	\item Any $R, R' \in K_r(\cS)$ are $\cS$-connected.
\end{asparaitem}
\end{definition}

\begin{figure}[!]
\centering
 \includegraphics[width=0.8\linewidth]{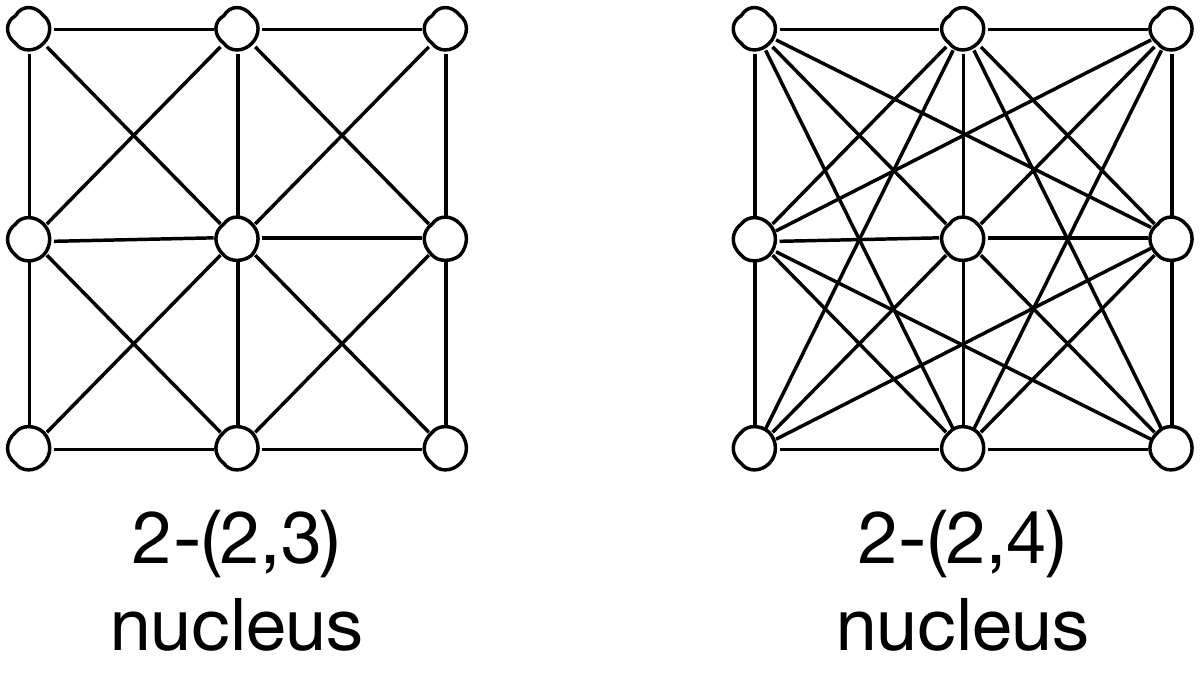}
\caption{Having same number of vertices, $2$-$(2,4)$ nucleus is denser than $2$-$(2,3)$.}
\label{fig:k2x} 
\end{figure}

We simply refer to $(r,s)$-nuclei when $k$ is unspecified. Note that we treat
nuclei as a union of cliques, though eventually, we look at this as a subgraph.
Our theoretical treatment is more convenient in the former setting, and hence
we stick with this definition. In our applications, we simply look at nuclei
as subgraphs.

Intuitively, a nucleus is a tightly connected cluster of cliques. 
For large $k$,
we expect the cliques in $\cS$ to intersect heavily, creating a dense subgraph.
For a fixed $k, r$ and same number of vertices,  the density of the nuclei increases, as we increase $s$.
Consider the example of \Fig{k2x}, where there is a $2$-$(2,3)$-nucleus and a $2$-$(2,4)$-nucleus
on the same number of vertices. Since in the latter case, we need every edge to participate
in at least $2$ $K_4$s, the resulting density is much higher.

\begin{figure}
\centering
 \includegraphics[width=0.8\linewidth]{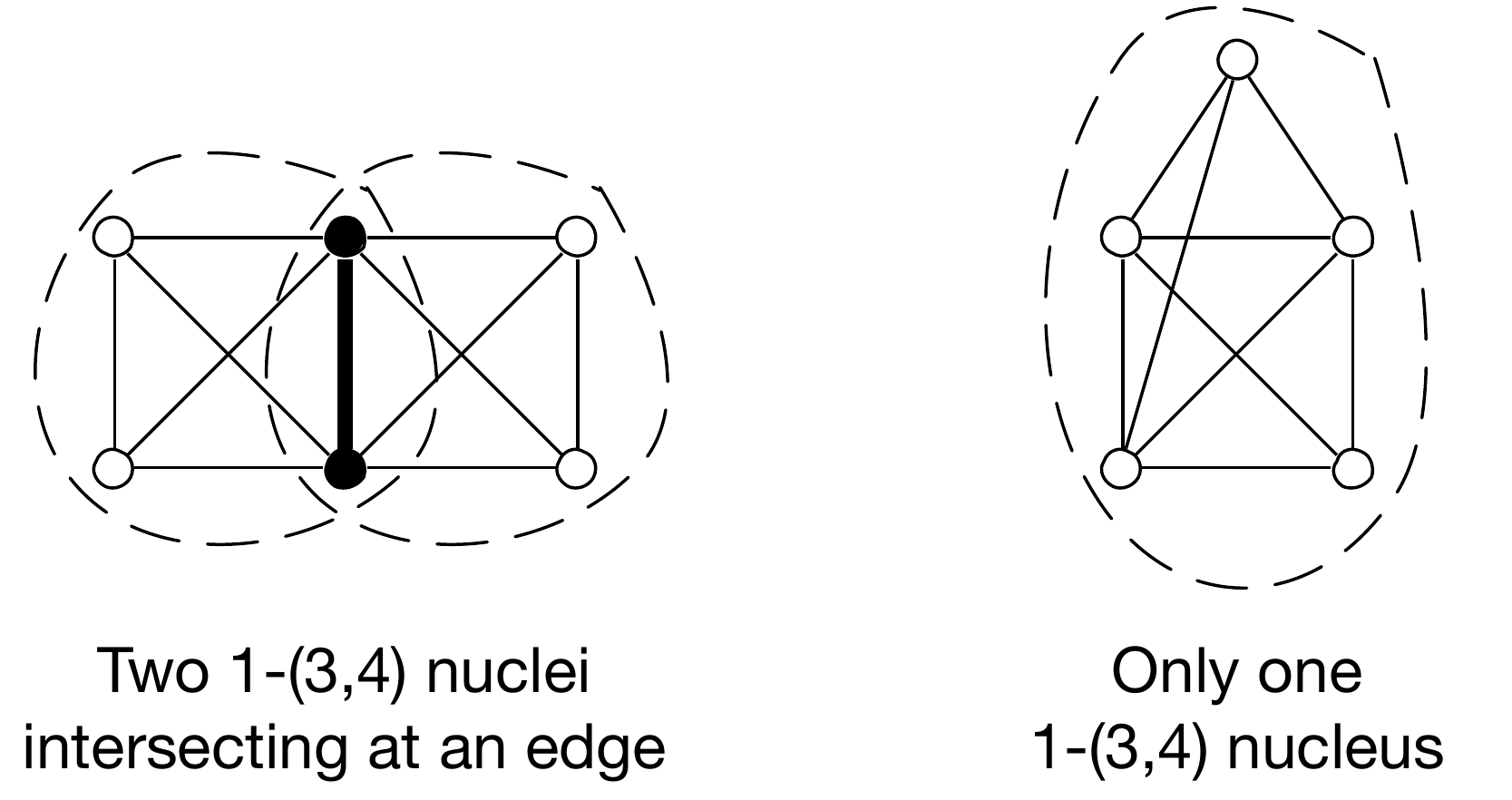}
\caption{The left figure shows two $(3,4)$-nuclei overlapping at an edge.
The right figure has only one $(3,4)$-nucleus}
\label{fig:k34_overlap} 
\end{figure}

As stated earlier, our definitions are inspired by $k$-cores and $k$-trusses.
Set $r=1, s=2$. A $k$-$(1,2)$-nucleus is a maximal (induced) connected subgraph with
minimum vertex degree $k$. \emph{This is exactly a $k$-core.} Setting $r=2, s=3$ gives
maximal subgraphs where every edge participates in at least $k$ triangles, and 
edges are triangle-connected. This is essentially the definition of $k$-trusses or triangle-cores.

So far we only discussed the degree constraint of nuclei. Note that a nucleus is not just
connected in the usual (edge) sense, but requires the stronger property of being $\cS$-connected.
The standard definitions of trusses or triangle-cores omit the triangle-connectedness. 
For us, this is critical. Two cliques of distinct $(r,s)$-nuclei \emph{can} intersect.
For example, when $r > 2$, nuclei can have edge overlaps. This allows for finding even denser
subgraphs, as \Fig{k34_overlap} shows. In the left, cores, trusses, etc. pick up the entire graph.
But there are actually $2$ different $1$-$(3,4)$-nuclei (each $K_4$) intersecting at an edge.
The $(3,4)$-nuclei are denser than the graph itself. Note that any edge disjoint decomposition
would not find two dense subgraphs.

Critically, the set of $(r,s)$-nuclei form a laminar family. A laminar family is a set system
where all pairwise intersections are trivial (either empty or contains one of the sets).

\begin{lemma} \label{sec:laminar} The family of $(r,s)$-nuclei form a laminar family.
\end{lemma}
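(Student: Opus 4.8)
The plan is to derive laminarity directly from the maximality in \Def{nucleus}, organized around two observations: a merging argument for nuclei of equal threshold $k$, and a monotonicity argument relating different thresholds. Throughout I would think of a nucleus as its set of $K_s$s, so two nuclei $\cS_1,\cS_2$ ``intersect'' exactly when they share a $K_s$; since any shared $K_s$ contains shared $K_r$s, it suffices to reason about shared $K_r$s.

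First I would establish the equal-$k$ case: if $\cS_1$ and $\cS_2$ are both $k$-$(r,s)$-nuclei with the same $k$ and they share some $R_0 \in K_r(\cS_1) \cap K_r(\cS_2)$, then $\cS_1 = \cS_2$. The idea is to show that $\cS_1 \cup \cS_2$ is itself a valid $k$-$(r,s)$-nucleus, which forces $\cS_1 = \cS_1 \cup \cS_2 = \cS_2$ by maximality. The degree condition is immediate from monotonicity: for $R \in K_r(\cS_1 \cup \cS_2)$, its $(\cS_1\cup\cS_2)$-degree is at least its $\cS_i$-degree in whichever nucleus it belongs to, hence $\ge k$. The connectivity condition is where the shared $K_r$ does the work: every $R \in K_r(\cS_1)$ is $\cS_1$-connected, and thus $(\cS_1\cup\cS_2)$-connected, to $R_0$, every $R' \in K_r(\cS_2)$ is likewise connected to $R_0$, so all $K_r$s of the union are mutually $(\cS_1\cup\cS_2)$-connected through $R_0$.

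Next I would handle arbitrary thresholds via monotonicity in $k$. Suppose $\cS_1$ is a $k_1$-nucleus and $\cS_2$ a $k_2$-nucleus with $k_1 \le k_2$, and that they share a $K_r$. Because the $k_2$-degree bound is stronger, every $R \in K_r(\cS_2)$ has $\cS_2$-degree $\ge k_2 \ge k_1$, so $\cS_2$ already satisfies the degree and connectivity constraints at level $k_1$ and therefore extends to a maximal such set, i.e.\ $\cS_2 \subseteq \cS_1'$ for some $k_1$-nucleus $\cS_1'$; the equal-$k$ step also makes this containing $k_1$-nucleus unique. Now $\cS_1$ and $\cS_1'$ are both $k_1$-nuclei sharing the $K_r$ that witnesses $\cS_1 \cap \cS_2 \ne \emptyset$, so the equal-$k$ step gives $\cS_1 = \cS_1'$ and hence $\cS_2 \subseteq \cS_1$. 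Thus any two nuclei sharing a $K_s$ are nested, while any two sharing none are disjoint, and the family is laminar.

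I expect the connectivity bookkeeping in the merging argument to be the main obstacle---specifically, verifying that $\cS$-connectivity (the stronger, clique-based notion from \Def{cl-path}, not ordinary edge connectivity) is preserved under union and that a single shared $K_r$ suffices to bridge the two nuclei. The degree monotonicity and the extension-to-maximal step should be routine once the merging lemma is in hand; the only care needed there is to confirm that the containing $k_1$-nucleus is unique, so that the reduction to the equal-$k$ case is legitimate.
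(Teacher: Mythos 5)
Your proof is correct and follows essentially the same route as the paper: form the union of the two overlapping nuclei, verify the degree condition by monotonicity and the connectivity condition by bridging through a shared $K_r$, and then invoke maximality to force containment. The paper does this in a single step---since $\cS_1 \cup \cS_2$ already satisfies both conditions at the smaller threshold $k_1$, maximality of $\cS_1$ gives $\cS_1 \cup \cS_2 = \cS_1$ directly, even when $k_1 \neq k_2$---so your detour through extending $\cS_2$ to a $k_1$-nucleus and reducing to the equal-$k$ case is sound but unnecessary.
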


\begin{proof} Consider $k$-$(r,s)$-nucleus $\cS$ and $k'$-$(r,s)$-nucleus $\cS'$, where $k \leq k'$.
Suppose they had a non-empty intersection, so some $K_s (S)$ is contained in both $\cS$ and $\cS'$.
Observe that $K_r$s in $K_r(\cS)$ are connected to $K_r$s in $K_r(\cS')$. Furthermore,
the $(\cS \cup \cS')$-degree of member of $K_r(\cS \cup \cS')$ is at least $k$.
Hence $\cS \cup \cS'$ satisfies the two conditions of being a nucleus, except maximality.
By $\cS$ is a $k$-$(r,s)$-nucleus, so $\cS \cup \cS' = \cS$. So any non-empty intersection is trivial.
\end{proof}

Consider two nuclei that are not ancestor-descendant.
By the above lemma, these two nuclei (considered as subgraphs of $G$) cannot share a $K_s$.
Actually, the argument above proves that they cannot even share a $K_r$. This is the key
disjointness property of nuclei.

Every laminar family is basically a hierarchical set system. Alternately, every laminar family
can be represented by a forest of containment. For every nucleus $\cS$, any other nucleus intersecting $\cS$
is either contained in $\cS$ or contains $\cS$. Furthermore, all these sets are nested in each other.
It makes sense to talk of the smallest sized nucleus containing $\cS$.
This leads to the main construct we use to represent nuclei. 

\begin{definition} \label{def:tree} Fix $r < s$. Define the \emph{forest of $(r,s)$-nuclei} as follows.
There is a node for each $(r,s)$ nucleus. The parent of every nucleus is the smallest (by cardinality)
other nucleus containing it. 
\end{definition}

In our figures, we will only show the internal nodes of out degree at least $2$, and contract any path of out degree $1$ vertices
into a single path. This preserves all the branching of the forest.

\section{Generating nucleus\\ decompositions} \label{sec:generating}

Our primary algorithmic goal is to construct the tree of nuclei. The algorithm is a direct
adaptation of the classic Matula-Beck result of getting $k$-cores in linear time~\cite{MaBe83}.
There are numerous technicalities involved in generalizing the proof. Intuitively, we do the following.
Construct a graph $\cH$ where the nodes are all $K_r$s of $G$ and there is an edge connecting
two $K_r$s if they are contained in a single $K_s$ of $G$. We then perform a core decomposition
on $\cH$. Actually, this does not work. Edges of $G$ (obviously) contain exactly $2$ vertices
of $G$, and the procedure above exactly produces nuclei for $r=1,s=2$. In general,
a $K_s$ contains $s\choose r$ $K_r$s, and the graph analogy above is incorrect. At some level,
we are performing a hypergraph version of Matula-Beck. The proofs therefore need
to be adapted to this setting.

Analogous to $k$-cores, the main procedure \setk{} (Algorithm~\ref{alg:find-nucleus})
assigns a number, denoted by $\kval(\cdot)$, to each $K_r$ in $G$.

\begin{algorithm}
\caption{\setk($G, r, s$)}
\label{alg:find-nucleus}
  Enumerate all $K_r$s and $K_s$s in $G(V,E)$\;
  For every $K_r$ $R$, initialize $\degk(R)$ to be the number of $K_s$s containing $R$\;
  Mark every $K_r$ as unprocessed\;
  \For{\meach unprocessed $K_r$ $R$ with minimum $\degk(R)$}{
  	$\kval(R) = \degk(R)$\;
    Find set $\cS$ of $K_s$s containing $R$\;
    \For{\meach $S \in \cS$}{
    	\If{any $K_r$ $R' \subset S$ is processed}{
    		Continue\;
    	}
    	\For{\meach $K_r$ $R' \subset S$, $R' \neq R$}{
    		\If{$\degk(R') > \degk(R)$}{ \label{step:11}
    			$\degk(R') = \degk(R') - 1$ \label{step:12}\;
    		}
    	}
    }
  	Mark $R$ as processed\;
  }
  \Return array $\kval(\cdot)$ \;
\end{algorithm}

It is convenient to denote the set of $K_r$s in $G$
by $R_1, R_2, \ldots$, where $R_i$ is the $i$th processed $K_r$ in \setk.
We will refer to this index as \emph{time}. When we say ``at time $t$",
we mean at the beginning of the iteration where $R_t$ is processed.

\begin{claim} \label{clm:mono} The sequence $\{\kval(R_i)\}$ is monotonically non-decreasing.
\end{claim}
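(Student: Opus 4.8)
The plan is to show that the value $\kval(R_i)$ assigned when $R_i$ is processed never decreases as $i$ grows. The key intuition is that \setk{} processes the $K_r$s in order of their current $\degk$ value, always picking a minimum-degree unprocessed $K_r$, and that the algorithm only ever \emph{decreases} the $\degk$ of unprocessed $K_r$s. So the minimum over the remaining unprocessed $K_r$s should not drop once we remove the current minimizer.

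First I would fix a time $t$ and compare $\kval(R_t)=\degk_t(R_t)$ (the degree at the moment $R_t$ is processed) with $\kval(R_{t+1})=\degk_{t+1}(R_{t+1})$. The main observation is the selection rule: at time $t$, $R_t$ is chosen as an unprocessed $K_r$ of \emph{minimum} current $\degk$, so in particular $\degk_t(R_t)\le \degk_t(R_{t+1})$, since $R_{t+1}$ is also unprocessed at time $t$. It therefore suffices to argue that the degree of the next-processed $K_r$ did not fall below $\kval(R_t)$ during the processing of $R_t$. Here the guard in Step~\ref{step:11} is exactly the mechanism that protects monotonicity: when processing $R_t$, the algorithm decrements $\degk(R')$ only for those $R'$ with $\degk(R') > \degk(R_t)$, i.e.\ strictly larger than $\kval(R_t)$. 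Hence after a single decrement such an $R'$ still satisfies $\degk(R')\ge \degk(R_t)=\kval(R_t)$, so no unprocessed $K_r$ ever has its degree pushed strictly below the value just assigned.

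Combining these two facts gives the chain $\kval(R_{t+1})=\degk_{t+1}(R_{t+1})\ge \kval(R_t)$: the selection rule yields $\degk_t(R_{t+1})\ge \degk_t(R_t)=\kval(R_t)$, and the Step~\ref{step:11} guard ensures that the decrements applied to $R_{t+1}$ while processing $R_t$ cannot bring $\degk(R_{t+1})$ strictly below $\kval(R_t)$. Since this holds for every consecutive pair, the sequence $\{\kval(R_i)\}$ is monotonically non-decreasing. I would phrase this as a one-step induction on $t$, which suffices because the claim is about consecutive terms.

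I expect the main obstacle to be arguing carefully about the guard in Step~\ref{step:11}, specifically handling $K_r$s whose degree equals $\kval(R_t)$ and so are \emph{not} decremented, versus those that are strictly larger and get decremented by exactly one: one must verify that after the (possibly multiple) decrements that $R_{t+1}$ may receive across different $S\in\cS$, its degree still stays at or above $\kval(R_t)$. The subtlety is that the guard compares against the \emph{current} $\degk(R_t)$, which is frozen at $\kval(R_t)$ once assigned, so even across several decrements within the processing of $R_t$ the threshold does not move; every decrement only fires while $\degk(R')$ is strictly above $\kval(R_t)$, hence $R'$ lands no lower than $\kval(R_t)$. Making this invariant precise — that at time $t$ every unprocessed $R'$ has $\degk_t(R')\ge \kval(R_{t-1})$ — is the crux of a clean inductive writeup.
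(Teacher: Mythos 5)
Your proof is correct and takes essentially the same route as the paper's: the paper's one-sentence argument invokes exactly your two ingredients, namely the minimum-$\degk$ selection rule and the \Step{11} guard ensuring that no $\degk(\cdot)$ value is pushed below the current $\kval(R)$. You simply make explicit the consecutive-pair induction and the observation that $\degk(R_t)$ stays frozen at $\kval(R_t)$ during its own iteration, details the paper leaves implicit.
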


\begin{proof} This holds because the loop goes $R$ in non-decreasing order of $d(R)$ and 
\Step{11} ensures that no new value of $\degk(\cdot)$ decreases below the current $\kval(R)$.
\end{proof}

\begin{asparaitem}
	\item Because of \Clm{mono}, we can define \emph{transition time} $t_i$ to be the first time when the $\kval$-value becomes $i$.
Formally, $t_i$ is the unique index such that $\kval(R_{t_i}) = i$ and $\kval(R_{t_i-1}) < i$. 
	\item We say $K_s$ $S$ is \emph{unprocessed at time $t$} if all $R \in K_r(S)$ are unprocessed at time $t$.
	This set of $K_s$s is denoted by $\cS_t$.
	\item The \emph{supergraph $\cG_t$} has node set $K_r(\cS_t)$, and $R, R' \in K_r(\cS_t)$ are connected
	by a link if $R \cup R'$ is contained in some $K_s$ of $\cS_t$. Links are associated with elements of $\cS_t$
	(and there may be multiple links between $R$ and $R'$).
\end{asparaitem}

We prove an auxiliary claim relating the $\degk(\cdot)$ values to $\cS_t$.

\begin{claim} \label{clm:unpro} At time $t$, for any unprocessed $K_r$ $R$, 
$\degk(R)$ is at least the $\cS_t$-degree of $R$. If $t = t_k$ (for some $k$),
then $\degk(R)$ is exactly the $\cS_t$-degree of $R$.
\end{claim}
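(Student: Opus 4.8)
The plan is to track, for each $K_r$ $R$, how many times \Step{12} decrements $\degk(R)$ and to match each such decrement against a $K_s$ containing $R$ that has left $\cS$. First I would record two bookkeeping identities. Before any processing, $\degk(R)$ is initialized to the total number of $K_s$s containing $R$, and the only modification to it is the decrement in \Step{12}; hence $\degk$ is non-increasing in time, and at any time $t$ it equals that initial count minus the number of decrements applied to $R$ so far. Dually, the $\cS_t$-degree of $R$ equals the same initial count minus the number of $K_s$s $S \supseteq R$ that are \emph{processed} at time $t$ (i.e.\ have at least one processed $K_r$). Both parts of the claim therefore reduce to comparing the number of decrements of $\degk(R)$ with the number of processed $K_s$s containing $R$.

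For the inequality I would set up an injection from the decrements of $\degk(R)$ (up to time $t$) into the $K_s$s containing $R$ that are processed at time $t$. Every decrement of $\degk(R)$ occurs while some $R^* \neq R$ is being processed and the outer \textbf{for each} $S \in \cS$ loop reaches a $K_s$ $S$ that contains both $R$ and $R^*$; I map that decrement to $S$. This is well defined and injective: the decrement lives inside the loop over the $K_r$s of $S$, so $S \supseteq R$, and processing $R^*$ makes $S$ processed from then on; moreover the ``Continue'' guard means that once any $K_r$ of $S$ is processed, no later pass through $S$ can decrement anything, so each $S$ accounts for at most one decrement of $R$. Injectivity yields (decrements to $R$) $\le$ (processed $K_s$s containing $R$), which is exactly $\degk(R) \ge \cS_t$-degree of $R$.

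For the equality at $t = t_k$, I would show the injection is onto. Take any $K_s$ $S \supseteq R$ processed by time $t_k$ and let $R^*$ be the first $K_r$ of $S$ to be processed, at some time $t' < t_k$. At $t'$ the ``Continue'' guard does not fire, since $R^*$ is the first processed $K_r$ of $S$, and $R \neq R^*$ because $R$ is still unprocessed at $t_k$; it remains only to check the threshold test of \Step{11}, namely $\degk(R) > \degk(R^*)$ at time $t'$. Here I would invoke the minimum-selection rule: $R_{t_k}$ has minimum $\degk$ among unprocessed $K_r$s and $\degk(R_{t_k}) = \kval(R_{t_k}) = k$, so every $K_r$ unprocessed at $t_k$, in particular $R$, has $\degk \ge k$ at $t_k$, and by monotonicity $\degk(R) \ge k$ already at the earlier time $t'$. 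Meanwhile $\degk(R^*) = \kval(R^*) \le k-1$ because $R^*$ was processed before $t_k$. Thus $\degk(R) > \degk(R^*)$ at $t'$, the decrement fires, and $S$ lies in the image; surjectivity combined with the previous paragraph gives equality.

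I expect the main obstacle to be the interlocking bookkeeping that each processed $K_s$ contributes \emph{exactly} one decrement: the ``Continue'' guard must be shown to prevent double counting (giving the inequality for all $t$), while the threshold test in \Step{11} must be shown never to spuriously suppress a decrement at a transition time (upgrading the inequality to equality at $t_k$). Making these two guard conditions interact correctly — one capping decrements from above, the other forcing them from below precisely when all unprocessed $K_r$s sit at $\degk \ge k$ — is the delicate part of the argument.
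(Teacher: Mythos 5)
Your proof is correct and takes essentially the same route as the paper's: both match each decrement of $\degk(R)$ to a processed $K_s$ containing $R$ (the ``Continue'' guard ensuring at most one decrement per $K_s$, which gives the inequality), and both show that at a transition time $t_k$ the test of \Step{11} never suppresses a decrement, since any $K_r$ processed before $t_k$ has $\kval \leq k-1$ while every still-unprocessed $K_r$ has $\degk \geq k$. If anything, your version states the two threshold bounds more precisely than the paper's own wording (which loosely writes $\kval(\cdot) \le k$ and $\degk(R) > k$ where your $\kval \le k-1$ and $\degk(R) \ge k$ is the exact comparison needed).
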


\begin{proof} Pick unprocessed $R'$. The value of $\degk(R)$ is initially the number
of $K_s$s containing $R'$. It is decremented only in \Step{12}, which happens only
when a processed $K_s$ containing $R'$ is found. (Sometimes, the decrement will
still not happen, because of \Step{11}.) Hence, the value of $\degk(R')$ at time $t$
is at least the number of unprocessed $K_s$s containing $R'$.

Suppose $t = t_k$. For any preceding $\hat{t} < t$, the current $\kval(\cdot)$ value
is always at most $k$. For unprocessed (at time $t$) $R$, $\degk(R) > k$. Hence
the decrement of \Step{12} will always happen, and $\degk(R)$ is exactly 
the $\cS_t$-degree of $R$.
\end{proof}

\begin{claim} \label{clm:contain} Every $k$-$(r,s)$-nucleus is contained in $\cS_{t_k}$.
\end{claim}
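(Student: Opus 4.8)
The plan is to unfold the definition of $\cS_{t_k}$ and reduce the containment to a statement about $\kval$-values. By definition, a $K_s$ $S$ lies in $\cS_{t_k}$ precisely when every $R \in K_r(S)$ is unprocessed at time $t_k$. Using \Clm{mono} together with the definition of the transition time $t_k$, a $K_r$ $R$ is unprocessed at time $t_k$ if and only if $\kval(R) \geq k$: since $R_{t_k}$ is the first processed $K_r$ whose $\kval$ reaches $k$, and the sequence $\{\kval(R_i)\}$ is non-decreasing, being processed at time $\geq t_k$ is equivalent to having $\kval \geq k$. So it suffices to show that for a given $k$-$(r,s)$-nucleus $\cS$, every $R \in K_r(\cS)$ satisfies $\kval(R) \geq k$; this forces every $S \in \cS$ into $\cS_{t_k}$.

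To establish this, I would run the standard ``first-removed'' argument from core decompositions. Among all $K_r$s in $K_r(\cS)$, consider the one processed earliest, say $R$ processed at time $t$. At the start of that iteration no other member of $K_r(\cS)$ has been processed, and $R$ itself is still unprocessed, so every $K_s$ in $\cS$ still has all of its constituent $K_r$s unprocessed; hence $\cS \subseteq \cS_t$.

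Now I would invoke \Clm{unpro}: at time $t$ the value $\degk(R)$ is at least the $\cS_t$-degree of $R$. Since $\cS \subseteq \cS_t$, the $\cS_t$-degree of $R$ is at least its $\cS$-degree, and the nucleus condition guarantees the latter is at least $k$. Therefore $\degk(R) \geq k$ when $R$ is processed, and the algorithm sets $\kval(R) = \degk(R) \geq k$. By \Clm{mono}, every $K_r$ in $K_r(\cS)$ processed after $R$ inherits a $\kval$ value of at least $\kval(R) \geq k$, so all of $K_r(\cS)$ has $\kval \geq k$, completing the reduction.

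The main obstacle I anticipate is the bookkeeping that translates ``unprocessed at time $t_k$'' into the clean inequality $\kval(\cdot) \geq k$; this hinges on reading off the transition time correctly via monotonicity, and on carefully verifying that at the start of iteration $t$ the chosen $R$ is itself still counted as unprocessed. I would also note that the connectivity (i.e.\ $\cS$-connectedness) half of the nucleus definition plays no role here---only the minimum $\cS$-degree property is used---so the argument is purely a hypergraph analogue of the classical core-number lower bound.
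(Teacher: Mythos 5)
Your proposal is correct and takes essentially the same route as the paper: both arguments pick the first-processed $R \in K_r(\cS)$ at time $t$, observe $\cS \subseteq \cS_t$, and combine \Clm{unpro} with the nucleus degree condition to get $\kval(R) = \degk(R) \geq k$. The only cosmetic difference is that you propagate $\kval \geq k$ to all of $K_r(\cS)$ via \Clm{mono} and translate back through the characterization of ``unprocessed at time $t_k$,'' whereas the paper concludes $t \geq t_k$ and uses $\cS_t \subseteq \cS_{t_k}$ directly.
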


\begin{proof} Consider $k$-$(r,s)$-nucleus $\cS$. Take the first $R \in K_r(\cS)$
that is processed. At this time (say $t$), no $S \in \cS$ can be processed.
Hence, $\cS \subseteq \cS_t$. By \Clm{unpro}, $d(R)$ is at least the $\cS_t$-degree of $R$,
which is at least the $\cS$-degree of $R$. The latter is at least $k$, since $\cS$
is a $k$-$(r,s)$-nucleus. By definition of $t_k$, $t \geq t_k$ and hence $\cS_t \subseteq \cS_{t_k}$.
Thus, $\cS \subseteq \cS_{t_k}$.
\end{proof}

The main lemma shows that the output of \setk{} essentially tells us the nuclei.

\begin{lemma} \label{lem:nuclei} The $k$-$(r,s)$-nuclei are exactly the links (which are $K_s$s) of connected
components of $\cG_{t_k}$.
\end{lemma}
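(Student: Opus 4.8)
The plan is to prove the two inclusions separately, identifying each connected component of $\cG_{t_k}$ with the union of the $K_s$s labeling its links. For a connected component $C$, write $\cS_C$ for this union of $K_s$s. I would first verify that every such $\cS_C$ is a $k$-$(r,s)$-nucleus, and then that every $k$-$(r,s)$-nucleus arises this way; together these give the claimed equality. A useful preliminary observation is that $\cG_{t_k}$ has no isolated nodes: each $K_s\in\cS_{t_k}$ contains $\binom{s}{r}\ge 2$ distinct $K_r$s (since $s>r$) and hence spawns at least one link, so every node of $C$ is an endpoint of some link and $K_r(\cS_C)$ is precisely the node set of $C$.

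For the direction ``component $\Rightarrow$ nucleus'', fix a component $C$ and check the two defining conditions for $\cS_C$. Connectivity is immediate: any two members of $K_r(\cS_C)$ are joined inside $C$ by a path of links, each of which lies in $\cS_C$, so they are $\cS_C$-connected. For the degree condition I would invoke \Clm{unpro}: at time $t=t_k$ the quantity $\degk(R)$ equals the $\cS_{t_k}$-degree of every unprocessed $R$. Since \setk{} processes in non-decreasing order of $\degk$ and $R_{t_k}$ is processed with $\degk(R_{t_k})=\kval(R_{t_k})=k$, every still-unprocessed $K_r$ at time $t_k$ has $\degk\ge k$, hence $\cS_{t_k}$-degree $\ge k$. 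Finally, every $K_s$ through a node $R\in C$ spans only $K_r$s of $C$ and so contributes only links inside $C$; thus the $\cS_C$-degree of $R$ equals its $\cS_{t_k}$-degree, which is $\ge k$. This establishes both conditions, leaving only maximality.

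Maximality I would obtain from the reverse direction. Let $\cS$ be any $k$-$(r,s)$-nucleus. By \Clm{contain}, $\cS\subseteq\cS_{t_k}$, so $K_r(\cS)$ consists of nodes of $\cG_{t_k}$; since $\cS$ is $\cS$-connected and $\cS\subseteq\cS_{t_k}$, these $K_r$s are also $\cS_{t_k}$-connected and therefore all lie in a single component $C$. Each $K_s\in\cS$ then spans only $K_r$s of $C$, giving $\cS\subseteq\cS_C$; as $\cS_C$ already satisfies both nucleus conditions and $\cS$ is maximal, $\cS=\cS_C$. Conversely, to show a given $\cS_C$ is maximal, I would extend it to a maximal valid union $\cT\supseteq\cS_C$ satisfying the two conditions (legitimate because the ground set of $K_s$s is finite and $\cS_C$ itself qualifies). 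Such a $\cT$ is a nucleus, hence equals $\cS_{C'}$ for some component $C'$ by the argument just given. Then $K_r(\cS_C)\subseteq K_r(\cS_{C'})$ forces the node set of $C$ into that of $C'$, and since distinct components are disjoint, $C=C'$ and $\cT=\cS_C$. So $\cS_C$ is indeed a nucleus.

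I expect the main obstacle to be the maximality bookkeeping rather than the degree or connectivity conditions. The latter are read off almost directly from the algorithm's invariant \Clm{unpro} together with the component structure of $\cG_{t_k}$, but maximality is a statement about \emph{all} unions of $K_s$s in $G$, not just those living in $\cS_{t_k}$. The role of \Clm{contain} is exactly to confine every nucleus to $\cS_{t_k}$, so that the global maximality question collapses to the purely combinatorial fact that connected components are disjoint. The two places requiring care are the no-isolated-nodes observation (which makes components and link-sets determine one another) and the finiteness argument licensing the extension to a maximal valid union.
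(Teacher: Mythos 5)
Your proof is correct and follows essentially the same route as the paper: confine every nucleus to $\cS_{t_k}$ via \Clm{contain}, use \Clm{unpro} plus the minimum-degree processing order to show each component of $\cG_{t_k}$ satisfies the degree and connectivity conditions, and then invoke maximality to force equality. The only difference is one of completeness, not of method: you explicitly establish the converse inclusion (that every component is itself maximal, via the finite extension argument and disjointness of components), a step the paper's terse proof leaves implicit.
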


\begin{proof} Consider $k$-$(r,s)$-nucleus $\cS$. By \Clm{contain}, it is contained 
in $\cS_{t_k}$. By the nucleus definition, $\cS$ is connected (as links) in $\cG_{t_k}$. Let $\cS'$
be the (set of links) connected component of $\cG_{t_k}$ containing $\cS$. 
By \Clm{unpro}, at time $t_k$, for any $R \in K_r(\cS')$, $\degk(R)$ is exactly
the $\cS_{t_k}$-degree of $R$. Since $\cS'$ is a connected component of $\cG_{t_k}$,
the $\cS_{t_k}$-degree is the $\cS'$-degree, which in turn is at least $k$.
In other words, $\cS'$ satisfies both conditions of being a $k$-$(r	,s)$-nucleus, except
maximality. By maximality of $\cS$, $\cS = \cS'$. 
\end{proof}

\textbf{Building the forest of nuclei:} From \Lem{nuclei}, it is fairly straightforward to get all the nuclei. First run \setk{} to 
get the processing times and the $\kval(\cdot)$ values. We can then get all $t_k$ times as well. Suppose for any $K_r$ in $G$,
we can access all the $K_s$s containing it. Then, it is routine to traverse $\cG_{t_k}$ to get the links of connected
components. To avoid traversing the same component repeatedly, we produce nuclei in reverse order of $k$.
In other words, suppose all connected components of $\cG_{t_{k+1}}$ have been determined. For $\cG_{t_k}$,
it suffices to determine the connected components involving nodes processed in time $[t_k, t_{k+1})$.
Any time a traversal encounters a node in $\cG_{t_{k+1}}$, we need not traverse further. This is because all
other connected nodes of $\cG_{t_{k+1}}$ are already known from previous traversals. We do not get into
the data structure details here, but it suffices to visit all nodes and links of $\cG_0$ exactly once.

\subsection{Bounding the complexity} \label{sec:complex}

There are two options of implementing this algorithm. The first is faster, but has forbiddingly large space.
The latter is slower, but uses less space. In practice, we implement the latter algorithm.
We use $\ct_r(v)$ for the number of $K_r$s containing $v$ and $\ct_r(G)$ for the total
number of $K_r$s in $G$. We denote by
$\rt_r(G)$ the running time of an arbitrary procedure that enumerates all $K_r$s in $G$.

\begin{theorem} \label{thm:fast} It is possible to build the forest of nuclei in $O(\rt_r(G) + \rt_s(G))$ time with $O(\ct_r(G) + \ct_s(G))$ space.
\end{theorem}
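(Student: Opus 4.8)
The plan is to account separately for the four phases of the construction --- clique enumeration, building the clique-incidence structure, running \setk, and assembling the forest --- and to show each fits within the stated budget. First I would enumerate all $K_r$s and $K_s$s, which by definition costs $\rt_r(G)+\rt_s(G)$ time and produces $\ct_r(G)$ and $\ct_s(G)$ objects respectively; during enumeration I would also build a dictionary mapping each $K_r$ (as its sorted vertex tuple) to an integer index, so that later lookups of a sub-clique's identity cost $O(1)$. I would then build, for each $K_r$ $R$, the list of $K_s$s containing it: iterate over every $K_s$ $S$, enumerate its $\binom{s}{r}$ sub-$K_r$s, and append $S$ to each of their lists. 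Since $\binom{s}{r}$ is a constant for fixed $r<s$, this takes $O(\ct_s(G))$ time and the lists occupy $O(\ct_s(G))$ space in total (each $K_s$ contributes $\binom{s}{r}$ incidences).

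The core of the argument is an amortized analysis of \setk{} in the style of Matula--Beck. The idea is to maintain the $\degk(\cdot)$ values in a bucket array indexed by degree, with $K_r$s of equal degree held in a doubly linked list so that a decrement (moving an element from bucket $d$ to bucket $d-1$) is $O(1)$. By \Clm{mono} together with the guard in \Step{11} --- which only ever decrements a $\degk(R')$ that strictly exceeds the current $\kval$ --- no degree ever drops below the value being assigned, so the pointer to the smallest nonempty bucket is monotone non-decreasing and the total cost of locating minima is $O(\ct_r(G)+\Delta)$, where $\Delta$ bounds the maximum degree. When $R$ is processed, the work is the length of its incidence list times the $O(\binom{s}{r})$ cost of scanning a $K_s$'s sub-cliques (both to test in \Step{11} whether one is already processed and to perform the decrements in \Step{12}). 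Summing over all $R$, each incidence $(R,S)$ is touched once through its \emph{owner} $R$, so the total is $O(\binom{s}{r}\sum_R |\text{list}(R)|)=O(\ct_s(G))$; in particular the number of decrements is $O(\ct_s(G))$, which dominates $\Delta$.

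It remains to turn the output of \setk{} into the forest. By \Lem{nuclei}, the $k$-$(r,s)$-nuclei are exactly the link sets of the connected components of $\cG_{t_k}$, so I would compute the transition times $t_k$ from the $\kval$ array and then, processing $k$ from largest to smallest as described after \Lem{nuclei}, traverse $\cG_{t_k}$ restricted to nodes whose processing time lies in $[t_k,t_{k+1})$, halting any traversal that reaches a node already placed in a higher component. Because each node and each link of $\cG_0$ is then visited exactly once, and $\cG_0$ has $\ct_r(G)$ nodes and $O(\ct_s(G))$ links (each $K_s$ induces $\binom{\binom{s}{r}}{2}$ links), this phase costs $O(\ct_r(G)+\ct_s(G))$ time and space. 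Adding the phases gives time $O(\rt_r(G)+\rt_s(G)+\ct_r(G)+\ct_s(G))$ and space $O(\ct_r(G)+\ct_s(G))$; since any enumeration procedure must emit every clique, $\rt_r(G)\ge\ct_r(G)$ and $\rt_s(G)\ge\ct_s(G)$, so the additive $\ct$ terms collapse into the claimed $O(\rt_r(G)+\rt_s(G))$ time bound.

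I expect the main obstacle to be the bucket-based priority-queue amortization: one must verify carefully that the \Step{11} guard really enforces a monotone minimum in this hypergraph-flavored generalization (where a single $K_s$ couples $\binom{s}{r}$ cliques at once), rather than merely in the graph case of~\cite{MaBe83}, and that the ``continue'' short-circuit fired when a sub-clique of $S$ is already processed does not break the accounting that charges each incidence exactly once to its owner. The secondary subtlety is the $O(1)$ sub-clique lookup, which quietly relies on hashing, or on the enumeration routine itself furnishing the index map.
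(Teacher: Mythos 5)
Your proposal is correct and follows essentially the same route as the paper's proof: enumerate the cliques, materialize the global supergraph $\cG_0$ (nodes for $K_r$s, constant-degree link structure for $K_s$s) in $O(\ct_r(G)+\ct_s(G))$ space, run \setk{} with the Matula--Beck bucket amortization so all work after enumeration is linear in the storage, and recover the forest via a single traversal of $\cG_0$ using \Lem{nuclei}. You simply spell out the details the paper leaves implicit --- the bucket-queue monotonicity, the per-incidence charging under the \Step{11} guard, and the observation that $\rt_r(G)\ge \ct_r(G)$ and $\rt_s(G)\ge \ct_s(G)$ absorb the linear-in-storage terms into the stated time bound.
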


\begin{proof} This is the obvious implementation. The very first step of \setk{} requires the
clique enumeration. Suppose we store the global supergraph $\cG = \cG_0$.
This has a node for every $K_r$ in $G$ and a link for every $K_s$ in $G$. The storage is 
$O(\ct_r(S) + \ct_s(G))$. From this point onwards, all remaining operations are linear
in the storage. This is by the analysis of the standard core decomposition algorithm
of Matula and Beck~\cite{MaBe83}. Every time we process a $K_r$, we can delete it and all incident links
from $\cG$. Every link is touched at most a constant number of times during the entire running on \setk.
As explained earlier, we can get all the nuclei by a single traversal of $\cG$.
\end{proof}

\begin{theorem} \label{thm:slow} It is possible to build the forest of nuclei in $O(\rt_r(G) + \sum_v \ct_r(v) d(v)^{s-r})$ time with $O(\ct_r(G))$ space.
\end{theorem}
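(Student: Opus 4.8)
The plan is to prove \Thm{slow} by describing a space-efficient implementation that avoids storing the global supergraph $\cG_0$ explicitly. The key difference from \Thm{fast} is that instead of materializing all links (i.e., all $K_s$s) up front and keeping them in memory, I would store only the $K_r$s together with enough adjacency information to regenerate, on demand, the $K_s$s containing any given $K_r$. So the first step is to enumerate all $K_r$s (costing $\rt_r(G)$ time and $O(\ct_r(G))$ space) and to initialize each $\degk(\cdot)$ value. Rather than precompute every $K_s$, I would compute the $\cS$-degree of a given $K_r$ locally each time \setk{} needs it, by examining the neighborhoods of the $r$ vertices comprising that $K_r$.

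The heart of the argument is the running-time accounting. In \setk{}, whenever a $K_r$ $R$ is processed, the algorithm must find all $K_s$s containing $R$ and update the $\degk(\cdot)$ of the other $K_r$s inside each such $K_s$. Without a stored supergraph, locating those $K_s$s requires a local search: fix the $r$ vertices of $R$, and extend to an $s$-clique by choosing $s-r$ additional common neighbors. I would bound the cost of this local enumeration for a single $K_r$ $R$ by charging it against the vertices of $R$ and their degrees. Concretely, extending an $r$-clique to the $K_s$s containing it can be carried out in time $O(d(v)^{s-r})$ per incident vertex $v$ (one factor of $d(v)$ for each of the $s-r$ vertices we must append). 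Summing the work over all $K_r$s, and observing that each vertex $v$ participates in $\ct_r(v)$ many $K_r$s, gives the total $\sum_v \ct_r(v)\, d(v)^{s-r}$ claimed in the theorem statement.

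The correctness half is inherited almost entirely from the earlier development: \Lem{nuclei} already establishes that the $k$-$(r,s)$-nuclei are exactly the links of connected components of $\cG_{t_k}$, and \Clm{mono}, \Clm{unpro}, and \Clm{contain} guarantee that the $\degk(\cdot)$ bookkeeping performed by \setk{} is identical regardless of whether the $K_s$s are stored or regenerated. Thus I only need to argue that the on-demand regeneration visits exactly the same $K_s$s as the stored version would, so that the same decrements occur and the same $\kval(\cdot)$ array is produced. The forest-building traversal of \Lem{nuclei} can then be run using the same local-search primitive, contributing no more than the stated time bound.

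The main obstacle will be the running-time bound, not correctness. I must show that each $K_s$ is not re-enumerated excessively: since a $K_s$ may contain $s\choose r$ distinct $K_r$s, a naive scheme could rediscover the same $K_s$ up to $s\choose r$ times across the processing of its constituent $K_r$s, and I must verify this only inflates the cost by a constant factor (as $r,s$ are fixed). The delicate point is arguing that the per-vertex local extension truly costs $O(d(v)^{s-r})$ and that charging it to $\ct_r(v)$ captures all the work without double-counting across vertices of the same $K_r$; I would handle this by assigning the extension cost of each $K_r$ to a canonical (e.g. lowest-degree) vertex among its members, so the sum telescopes cleanly into $\sum_v \ct_r(v)\, d(v)^{s-r}$.
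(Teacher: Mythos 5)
Your proposal is correct and follows essentially the same route as the paper: lazily regenerating the $K_s$s containing each $K_r$ from vertex neighborhoods (storing only the $K_r$s, hence $O(\ct_r(G))$ space), and bounding the work by charging $d(v)^{s-r}$ to each vertex of each processed $K_r$. The double-counting worry in your last paragraph resolves itself without any canonical-vertex device, since summing $d(v)^{s-r}$ over \emph{all} vertices of all $K_r$s and exchanging the order of summation yields exactly $\sum_v \ct_r(v)\, d(v)^{s-r}$, which is the paper's one-line accounting.
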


\begin{proof} Instead of explicitly building $\cG$, we only build adjacency lists when required.
The storage is now only $O(\ct_r(G))$.
In other words, given a $K_r$ $R$, we find all $K_s$s containing $R$ only when $R$ is processed/traversed.
Each $R$ is processed or traversed at most once in \setk{} and the forest building.
Suppose $R$ has vertices $v_1, v_2, \ldots, v_r$. We can find all $K_s$s containing $R$ by looking
at all $(s-r)$-tuples in each of the neighborhoods of $v_i$. (Indeed, it suffices to look at just one
such neighborhood.) This takes time at most $\sum_{R} \sum_{v \in R} d(v)^{s-r} $ 
$= \sum_v \sum_{R \ni v} d(v)^{s-r} $ $= \sum_v \ct_r(v) d(v)^{s-r}$.
\end{proof}

Let us understand these running times. When $r < s \leq 3$, it clearly benefits to go with \Thm{fast}.
Triangle enumeration is a well-studied problem and there exist numerous optimized, parallel solutions
for the problem. In general, the classic triangle enumeration of Chiba and Nishizeki takes $O(m^{3/2})$~\cite{ChNi85}
and is much better in practice~\cite{Co09,ScWa05,SuVa11}. This completely bounds the time and space complexities.

\begin{table*}[!]
\centering
\scriptsize
\renewcommand{\tabcolsep}{2pt}
\begin{tabular}{|l||r|r|l|r|c|l|l|}\hline
		 		& |V|	 	&	 |E| & Description & $\sum_v c_3(v)d(v)$ & $(3,4)$ time (sec) & $\substack{\textrm{~\cite{Tsourakakis13}} \\ \textrm{Density (size)}}$ 
		 		& $\substack{(3,4)\textrm{-nucleus} \\ \textrm{Density (size)}}$ \\\hline \hline
dolphins		 & 	$62$	 & 	$159$ &  Biological   &  $2.2$K	   &  $< 1$	 & 	$0.68 (8)$	 & 	$0.71 (8)$ \\ \hline
polbooks		 & 	$105$	 & 	$441$ &  US Politics Books &   $23.8$K	   &  $< 1$	 & 	$0.67 (13)$	 &    $0.62 (13)$ \\ \hline
adjnoun			 & 	$112$	 & 	$425$ &  Adj. and Nouns &   $17.6$K	   &  $< 1$	 & 	$0.60 (15)$	 & 	$0.22 (32)$ \\ \hline
football		 & 	$115$	 & 	$613$ &  World Soccer 98 &  $26.3$K	   &  $< 1$	 & 	$0.89 (10)$	 & 	$0.89 (10)$	\\ \hline
jazz			 & 	$198$	 & 	$2.74K$	 &  Musicians & 	$2.3$M 	   &  $< 1$	 & 	$1.00 (30)$	 & 	$1.00 (30)$	\\ \hline
celegans n.		 & 	$297$	 & 	$2.34K$	 &  Biological & 	$418$K	   &  $< 1$	 & 	$0.61 (21)$	 & 	$0.91 (10)$ \\ \hline
celegans m.		 & 	$453$	 & 	$2.04K$	 &  Biological & 	$565$K	   &  $< 1$	 & 	$0.67 (17)$	 & 	$0.64 (18)$ \\ \hline
email			 & 	$1.13K$	 & 	$5.45K$	 &  Email & 		$1.2$M	   &  $< 1$	 & 	$1.00 (12)$	 & 	$1.00 (12)$ \\ \hline
facebook		 & 	$4.03K$	 & 	$88.23K$  &  Friendship & 	$712$M	   &  $93$	 & 	$0.83 (54)$	 & 	$0.98 (109)$\\ \hline
protein\_inter.  &  $9.67K$ 	 & 	$37.08K$  &  Protein Inter. &  $35$M  	   &  $< 1$	 & 	$1.00 (11)$	 & 	$1.00 (11)$	\\ \hline
as-22july06		 &  $22.96K$	 & 	$48.43K$  &  Autonomous Sys. &  $199$M 	   &  $< 1$	 & 	$0.58 (12)$	 & 	$1.00 (18)$	\\ \hline
twitter			 & $81.30K$	 & 	$2.68M$  &  Follower-Followee &  $1.8$B	   &  $396$	 & 	$0.85 (83)$	 & 	$1.00 (26)$	\\ \hline
soc-sign-epinions &  $131.82K$  &  $841.37K$  &  Who-trust-whom &  $1.4$B	   &  $242$	 & 	$0.71 (79)$	 & 	$1.00 (112)$ \\ \hline
coAuthorsCiteseer  &  $227.32K$  &  $814.13K$ &  CoAuthorship  &  $2.1$B	   &  $50.1$	 & 	$1.00 (87)$  & 	$1.00 (87)$ \\ \hline
citationCiteseer  &  $268.49K$  &  $1.15M$ &  Citation	 &  $297$M 	   &  $3.4$	 & 	$0.71 (10)$	 & 	$1.00 (13)$	\\ \hline
web-NotreDame	 &  $325.72K$	 &  $1.49M$ &  Web			 &  $33.9$B	   &  $671$	 & 	$1.00 (151)$ & 	$1.00 (155)$ \\ \hline
amazon0601	 &   $403.39K$ &  $3.38M$ &  CoPurchase		 &  $802$M	   &  $23$	 & 	$1.00 (11)$	 & 	$1.00 (11)$ \\ \hline
web-Google	 & 	$875.71K$ & 	$5.10M$	 &  Web			 &  $11.4$B	   &  $163$	 & 	$1.00 (46)$	 & 	$1.00 (33)$	\\ \hline
com-youtube	 & 	$1.13M$	 &  $2.98M$	 &  Social	 &  $451$M  &  $43$	 & 	$0.49 (119)$ & 	$0.92 (24)$	\\ \hline
as-skitter	 & 	$1.69M$	 &  $11.09M$  &  Autonomous Sys.  &  $1.6$B		   &  $1,036$	 & 	$0.53 (319)$ & 	$0.94 (91)$	\\ \hline
wikipedia-2005  &  $1.63M$  &  $19.75M$	 &  Wikipedia Link  &  $741$B	  	   &  $1,312$	 & 	$0.53 (33)$ & 	$0.82 (14)$ \\ \hline
wiki-Talk  & 	$2.39M$	 & 	$5.02M$	 &  Wikipedia User  &  $136$B	  	   &  $605$	 & 	$0.48 (321)$ & 	$0.59 (95)$	\\ \hline
wikipedia-200609  & 	$2.98M$	 &  $37.26M$  &  Wikipedia Link &  $2,015$B  &  $2,830$	 & 	$0.49 (376)$ & 	$0.62 (103)$ \\ \hline
wikipedia-200611  & 	$3.14M$	 &  $39.38M$  &  Wikipedia Link &  $2,197$B  &  $3,039$	 & 	$1.00 (55)$ & 	$1.00 (32)$ \\ \hline								
\end{tabular}
\caption{Important statistics for the real-world graphs of different types and sizes. Largest graph in the dataset has more than $39M$ edges. Times are in seconds. Density of subgraph $S$ is $|E(S)|/{|S|\choose 2}$ where
$E(S)$ is the set of edges internal to $S$. Sizes are in number of vertices.}
\label{tab:properties}
\end{table*}

For our best results, we build the $(3,4)$-nuclei, and the number of $K_4$s is too large to store.  
We go with \Thm{slow}. The storage is now at most the number of triangles, which is manageable.
The running time is basically bounded by $O(\sum_v \ct_r(v) d(v))$. The number of triangles
incident to $v$, $\ct_3(v)$ is $\cc(v) d(v)^2$, where $\cc(v)$ is the clustering coefficient of $v$.
We therefore get a running time of $O(\sum_v \cc(v) d(v)^3)$. This is significantly superlinear,
but clustering coefficients generally decay with degree~\cite{SaCaWiZa10,SePiKo13-tri-j}. Overall,
the implementation can be made to scale to tens of millions of edges with little difficulty.

\section{Experimental Results}\label{sec:experiments}

We applied our algorithms to large variety of graphs, obtained from 
SNAP~\cite{snap} and UF Sparse Matrix
Collection\cite{UF}. The vital statistics of these graphs are given in \Tab{properties}.
All the algorithms in our framework are implemented in C++ and compiled
with \texttt{gcc 4.8.1} at -O2 optimization level. All experiments are
performed on a Linux operating system running on a machine with two Intel Xeon
E5520 2.27 GHz CPUs, with 48GB of RAM.

We computed the $(r,s)$-nuclei for all choices of $r < s \leq 4$, but do not present
all results for space considerations. We mostly observe that
the forest of $(3,4)$-nuclei provides the highest quality output, both in terms
of hierarchy and density. 

As mentioned earlier, we will now treat the nuclei as just induced subgraphs of $G$.
A nucleus can be considered as a set of vertices, and we take all edges among
these vertices (induced subgraph) to attain the subgraph. The \emph{size} of a nucleus
always refers to the number of vertices, unless otherwise specified.
For any set $S$ of vertices,
the density of the induced subgraph is $|E(S)|/{|S|\choose 2}$, where
$E(S)$ is the set of edges internal to $S$. \emph{We ignore any nucleus with less than 10 vertices. 
Such nuclei are not considered in any of our results.}

For brevity, we present detailed results on only 4 graphs (given in \Tab{properties}):
\fb, \epinion, \notredame, and \wiki. This covers
a variety of graphs, and other results are similar.

\subsection{The forest of nuclei} \label{sec:forest}

We were able to construct the forest of $(3,4)$-nuclei for all graphs in \Tab{properties},
but only give the forests for \fb{} (\Fig{fb-forest}), \epinion{} (\Fig{epinion-forest}),
and \notredame{} (\Fig{notredame-forest}). For the \notredame{} figure, we could
not present the entire forest, so we show some trees in the forest that had nice
branching. The density is color coded, from blue (density $0$) to red (density $1$).
The nuclei sizes, in terms of vertices, are coded by shape: circles correspond to  at most $10^2$ vertices,
hexagons in the range $[10^2, 10^3]$, squares in the range $[10^3,10^4]$, and
triangles are anything larger. The relative size of the shape, is the relative
size (in that range) of the set.

\begin{figure*}[!]
\centering
\begin{minipage}{.7\linewidth}
\captionsetup{type=subfigure}
 \includegraphics[width=\linewidth]{}
\vspace*{-15ex}
\end{minipage}
\begin{minipage}{.4\linewidth}
\captionsetup{type=subfigure}
 \includegraphics[width=\linewidth]{legend.png}
\end{minipage}
\caption{$(3,4)$-nuclei forest for \epinion. There are 465 total nodes and 75 leaves in the
forest. There is a clear hierarchical structure of dense subgraphs. Leaves are mostly red
(> $0.8$ density). There are also some light blue hexagons, representing subgraphs of size $\ge 100$
vertices with density of at least $0.2$.}
\label{fig:epinion-forest}
\end{figure*}
\begin{figure*}
\centering
\begin{minipage}{.75\linewidth}
\captionsetup{type=subfigure}
 \includegraphics[width=\linewidth]{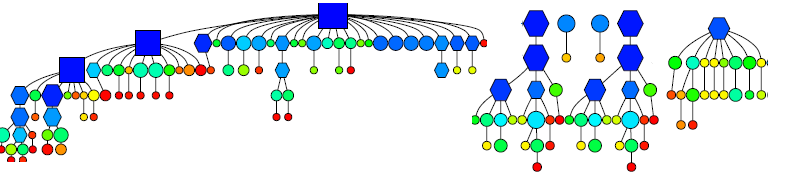}
\end{minipage}
\caption{Part of the $(3,4)$-nuclei forest for \notredame. In the entire forest, there are $2059$ nodes
and $812$ leaves. $79$ of the leaves are clique, up to the size of $155$. There is a nice branching
structure leading to a decent hierarchy.}
\label{fig:notredame-forest}
\end{figure*}

\begin{figure*}[!]
\begin{minipage}{0.3\columnwidth}
\captionsetup{type=subfigure}
\centering
\vspace*{1.5cm}
\includegraphics[height=3.5cm, keepaspectratio]{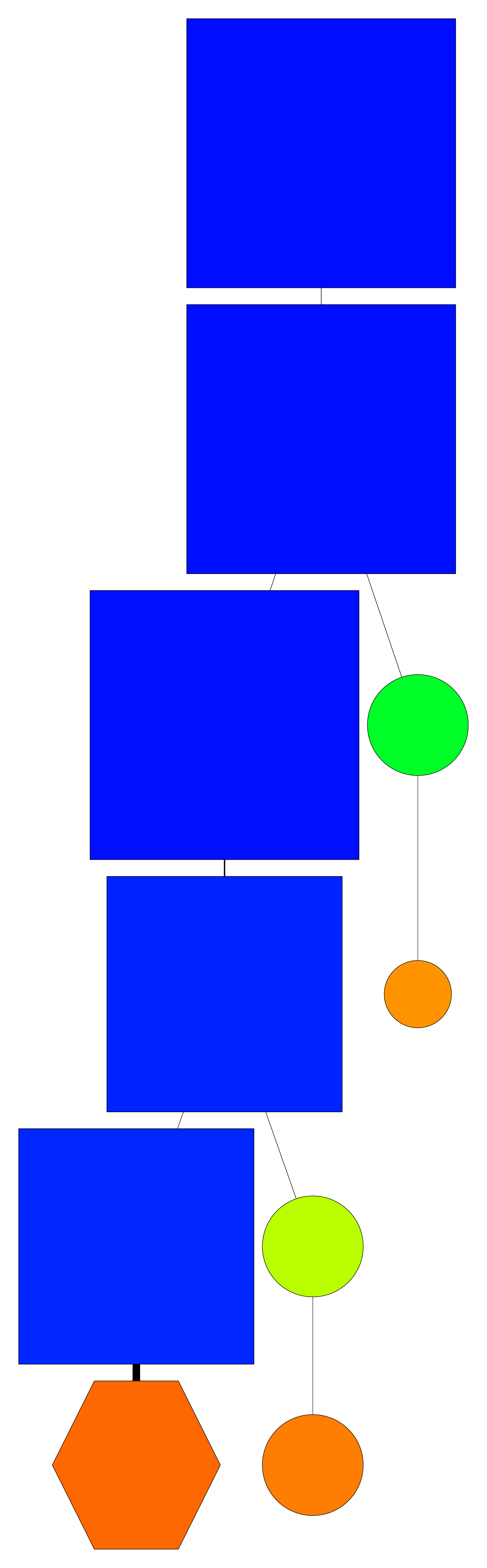}
\vspace*{2cm}
\caption{$(1,2)$-nuclei
}
\label{fig:fb_12}
\end{minipage}
\hspace*{-3ex}
\begin{minipage}{0.3\columnwidth}
\captionsetup{type=subfigure}
\centering
\vspace*{1.5cm}
\includegraphics[height=3.5cm, keepaspectratio]{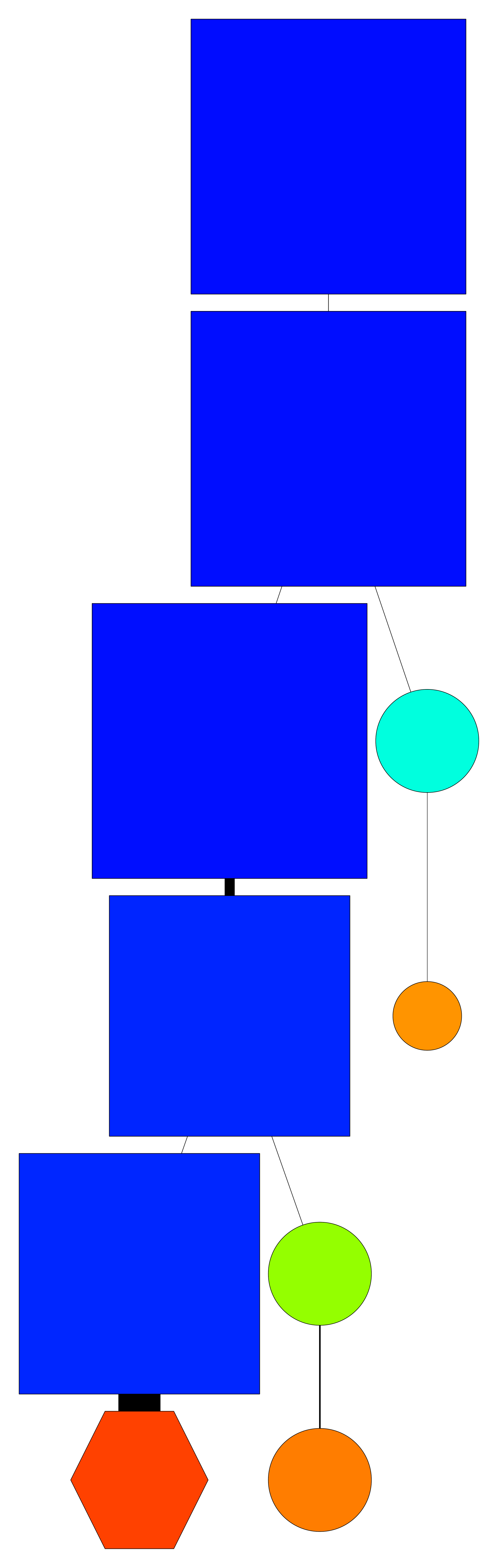}
\vspace*{2cm}
\caption{$(1,3)$-nuclei 
}
\label{fig:fb_13}
\end{minipage}
\hspace*{-3ex}
\begin{minipage}{0.3\columnwidth}
\captionsetup{type=subfigure}
\centering
\vspace*{1.5cm}
\includegraphics[height=3.5cm, keepaspectratio]{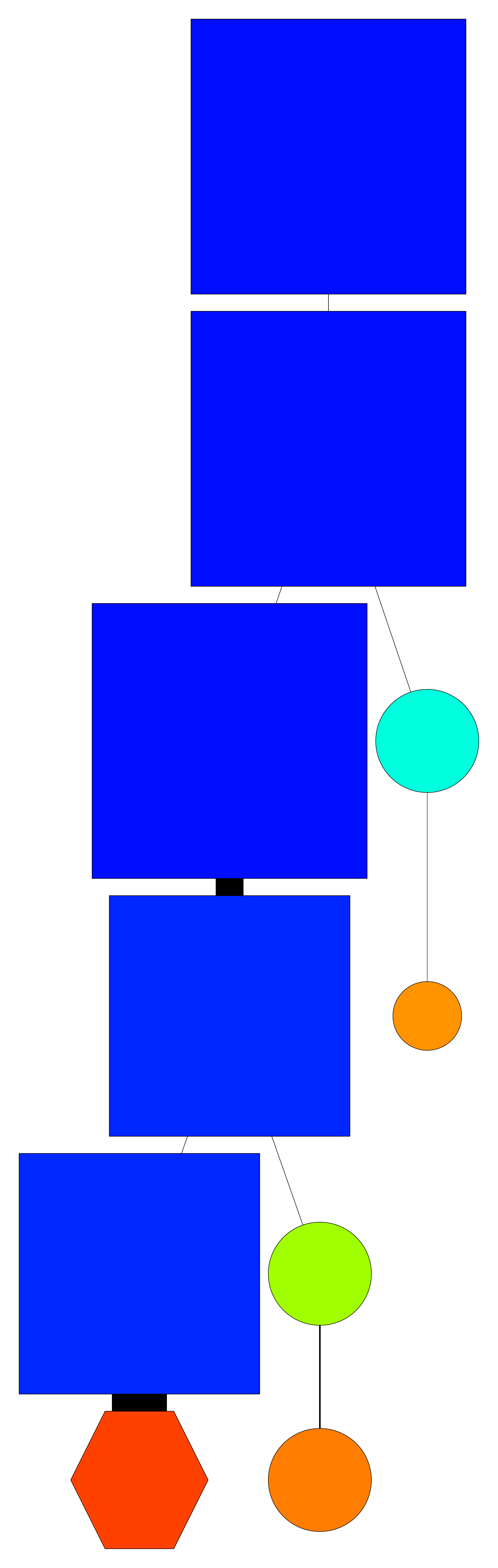}
\vspace*{2cm}
\caption{$(1,4)$-nuclei
}
\label{fig:fb_14}
\end{minipage}
\hspace*{-3ex}
\begin{minipage}{0.65\columnwidth}
\captionsetup{type=subfigure}
\centering
\includegraphics[height=7cm,keepaspectratio]{facebook_23.pdf}
\caption{$(2,3)$-nuclei
}
\label{fig:fb_23}
\end{minipage}
\hspace*{-3ex}
\begin{minipage}{0.65\columnwidth}
\captionsetup{type=subfigure}
\centering
\includegraphics[height=7cm,keepaspectratio]{facebook_24.pdf}
\caption{$(2,4)$-nuclei
}
\label{fig:fb_24}
\end{minipage}
\caption{$(r,s)$-nuclei forests for \fb~ when $r < s \leq 4$ (Except $(3,4)$, which is given in
\Fig{fb-forest}). For $r=1$, trees are more like chains. Increasing $s$ results in
larger number of internal nodes, which are contracted in the illustrations. There is some
hierarchy observed for $r=2$, but it is not as powerful as $(3,4)$-nuclei, i.e., branching
structure is more obvious in $(3,4)$-nuclei.}
\label{fig:tree1}
\end{figure*}

Overall, we see that the $(3,4)$-nuclei provide a hierarchical representation
of the dense subgraphs. The leaves are mostly red, and their densities
are almost always $> 0.8$. But we obtain numerous nuclei of intermediate sizes
and densities. In the \fb{} forest and to some extent in the \notredame{} forest, we see hexagons of light blue to green 
(nuclei of $>100$ vertices of densities of at least $0.2$).
The branching is quite prominent, and the smaller dense nuclei tend to nest
into larger, less dense nuclei. This held in every single $(3,4)$-nucleus forest
we computed. This appears to validate the intuition that real-world networks have
a hierarchical structure. 

\begin{figure*}[!t]
\label{fig:all-comp}
  \centering
  \subfloat[\epinion]{\includegraphics[width=0.32\textwidth,keepaspectratio]{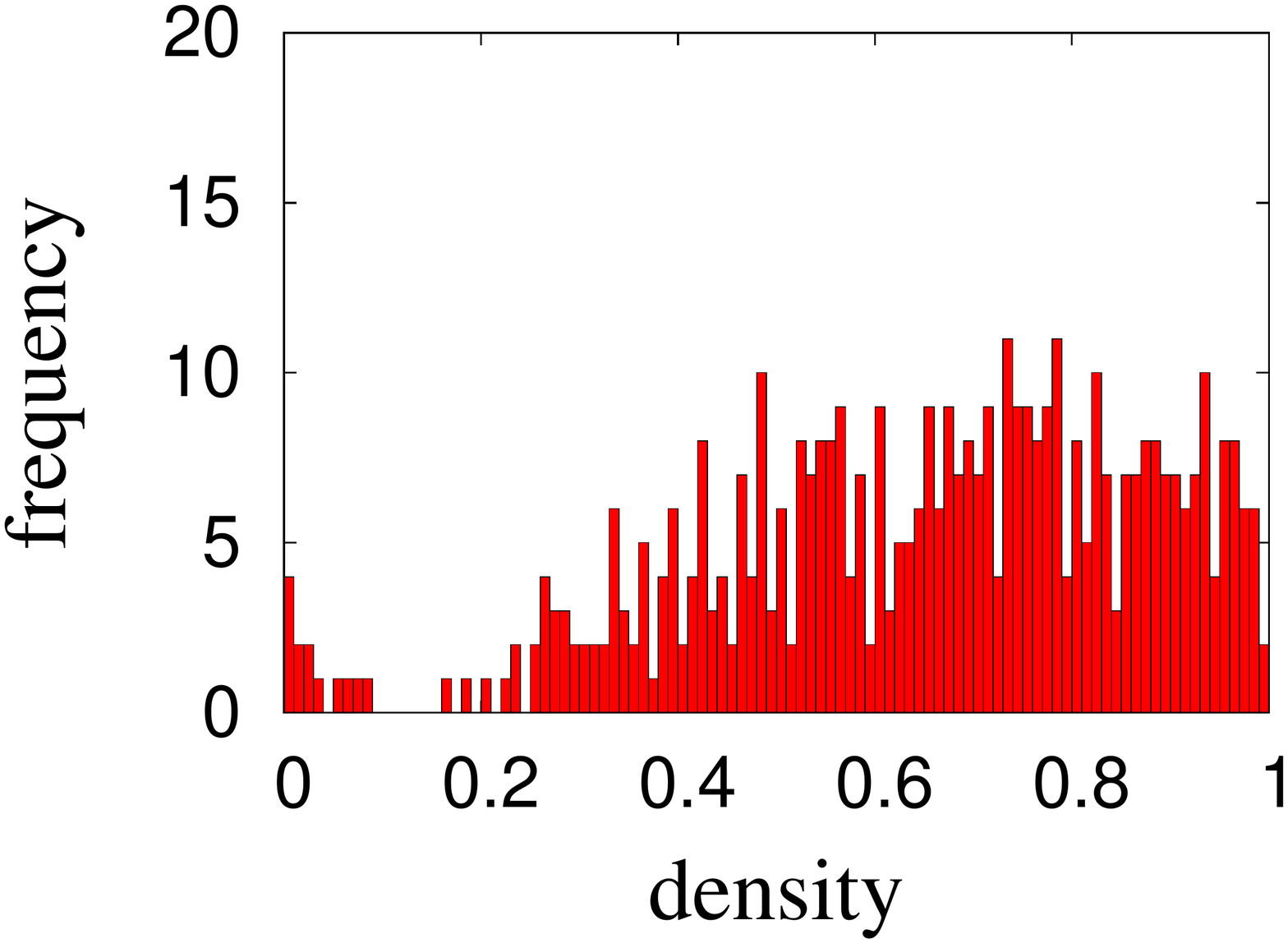}\label{fig:epinion-density} } \ 
  \subfloat[\notredame]{\includegraphics[width=0.32\textwidth,keepaspectratio]{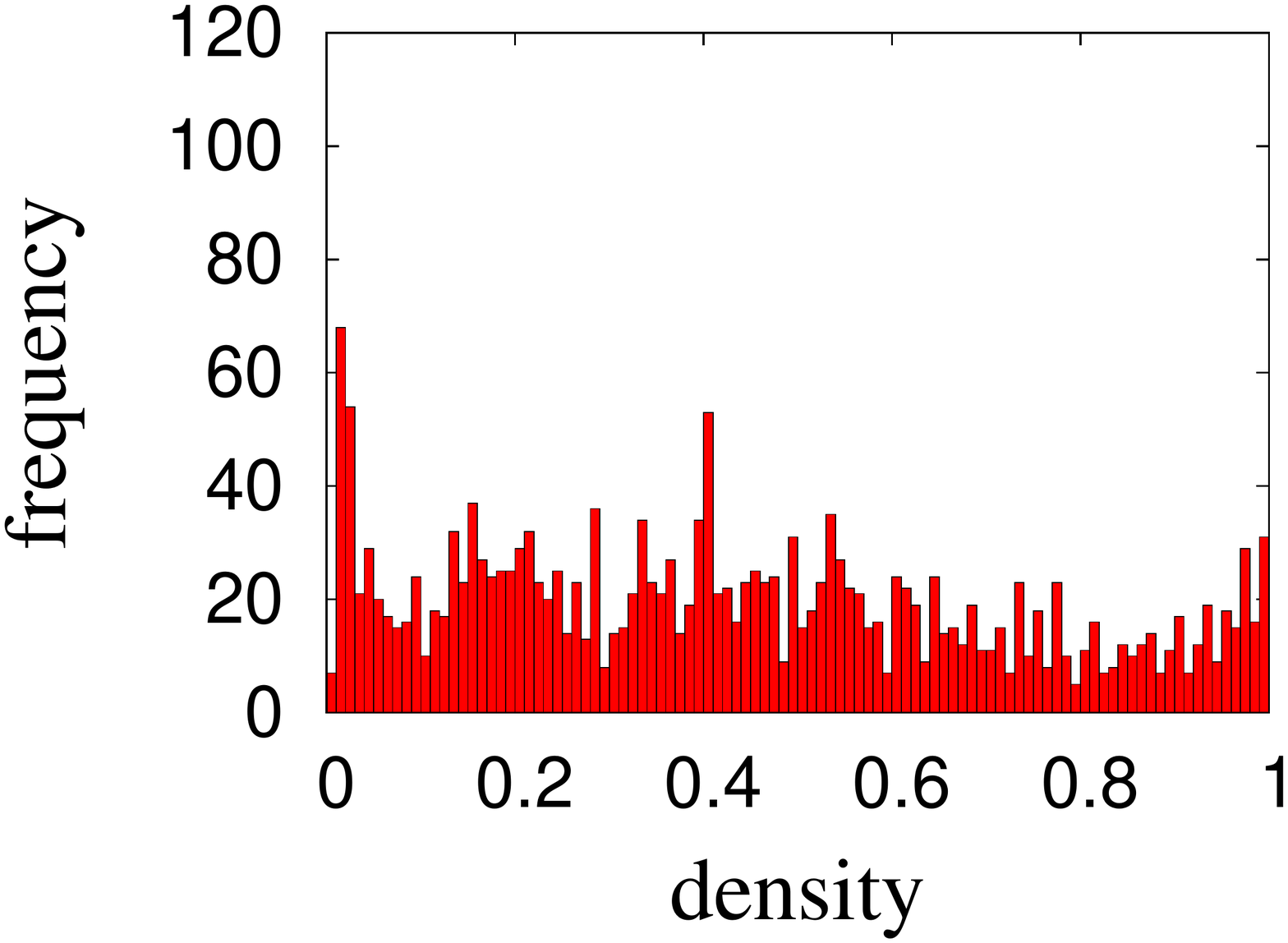}\label{fig:nd-density} } \ 
  \subfloat[\wiki]{\includegraphics[width=0.32\textwidth,keepaspectratio]{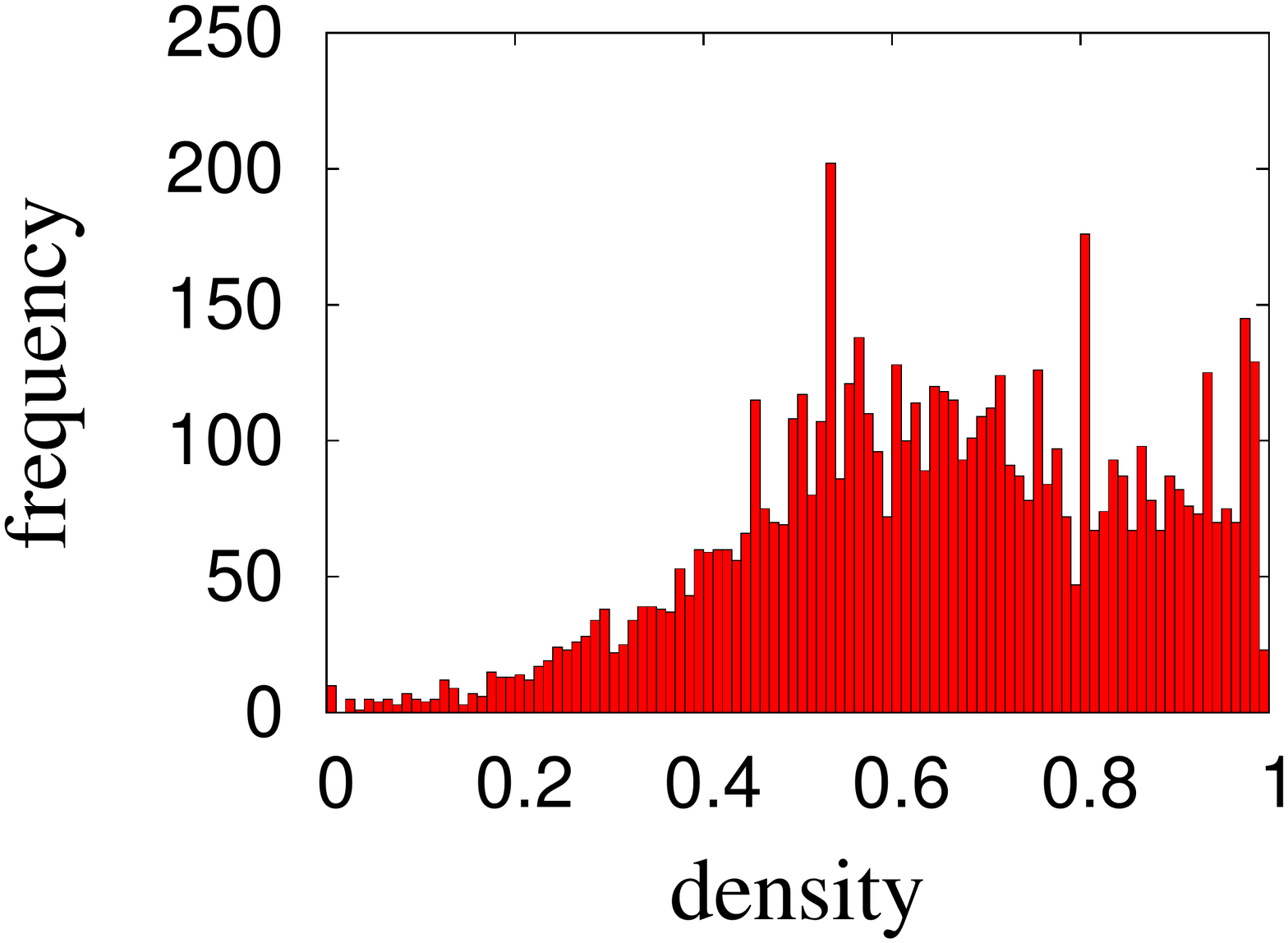}\label{fig:wiki-density} } \ 
   \caption{Density histograms for nuclei of three graphs. $x$-axis (binned) is the density
   and $y$-axis is the number of nuclei (at least 10 vertices) with that density. Number of
   nuclei with the density above 0.8 is significant: $139$ for \epinion, $355$ for \notredame,
   and $1874$ for \wiki. Also notice that, the mass of the histogram is shifted to right in \epinion{}
   and \wiki graphs.\label{fig:density}}
\end{figure*}

\begin{figure*}[!t]
  \centering
  \subfloat[\epinion]{\includegraphics[width=0.32\textwidth,keepaspectratio]{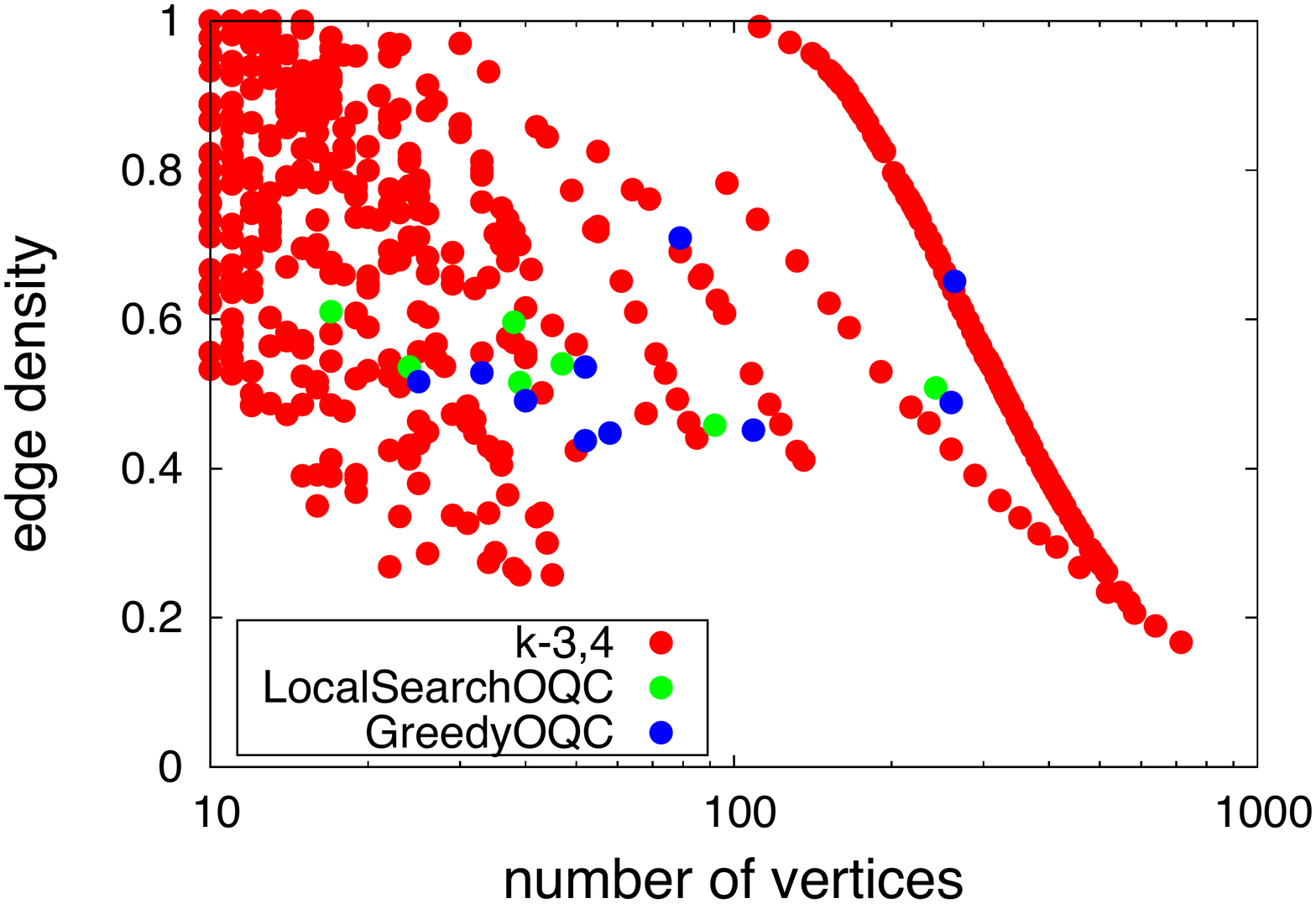}\label{fig:epinion-scatter} } \ 
  \subfloat[\notredame]{\includegraphics[width=0.32\textwidth,keepaspectratio]{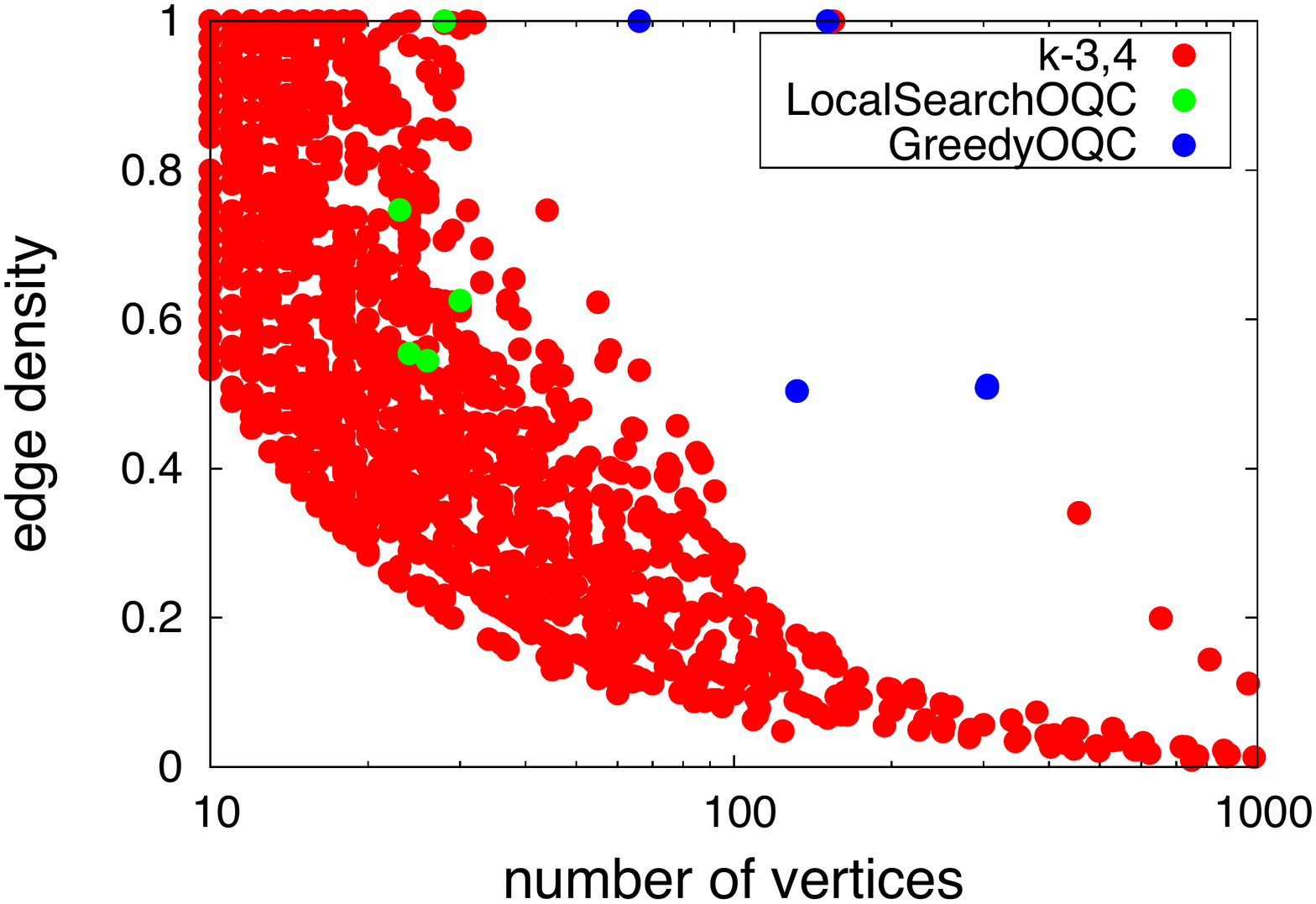}\label{fig:nd-scatter} } \ 
  \subfloat[\wiki]{\includegraphics[width=0.32\textwidth,keepaspectratio]{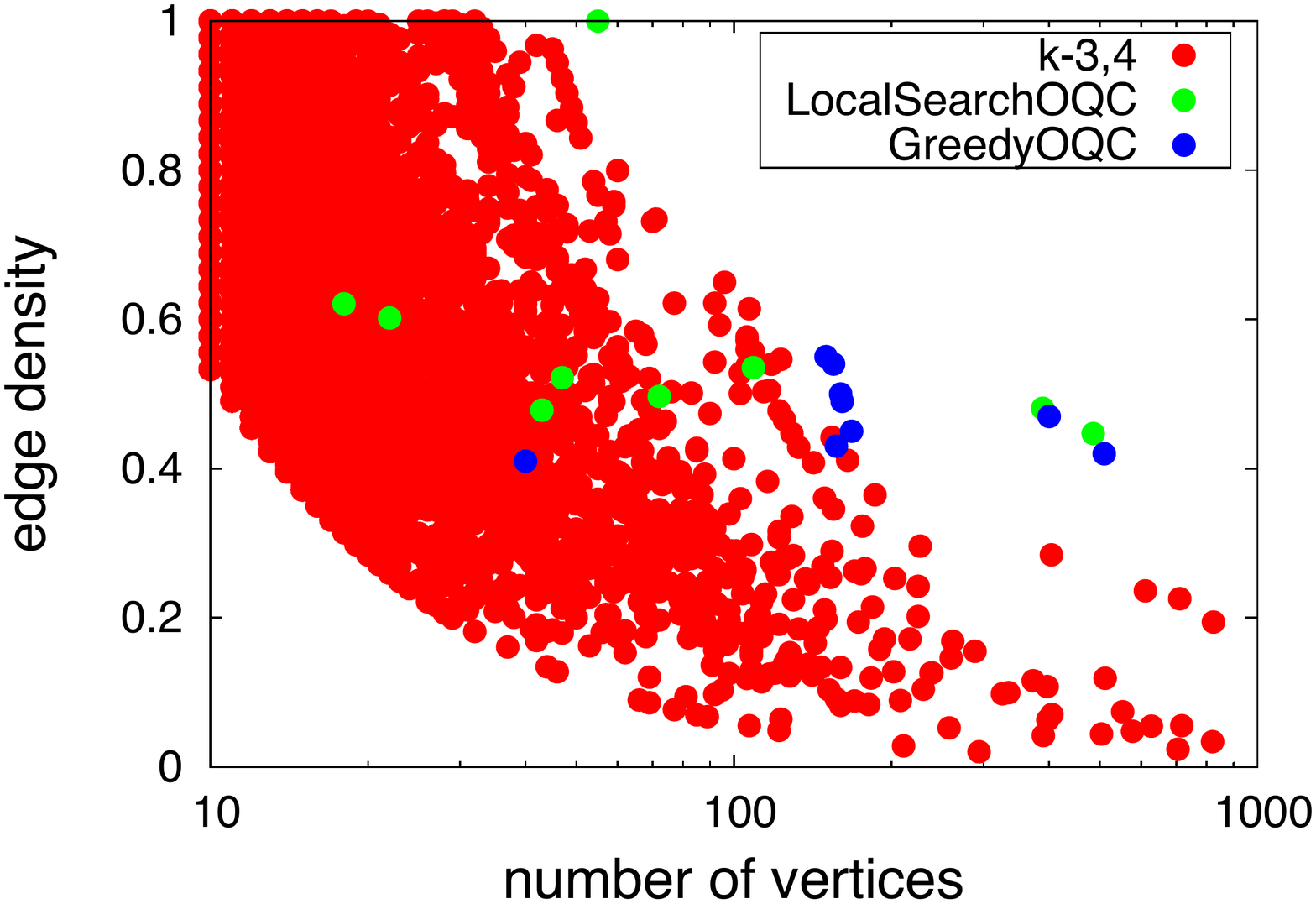}\label{fig:wiki-scatter} } \ 
   \caption{Density vs. size plots for nuclei of three graphs. State-of-the-art algorithms
   are depicted with OQC variants, and they report one subgraph at each run. We ran
   them 10 times to get a general picture of the quality. Overall, $(3,4)$-nuclei is very competitive
   with the state-of-the-art and produces many number of subgraphs with high quality and non-trivial 
   sizes.\label{fig:scatter}}
\end{figure*}

The $(3,4)$-nuclei figures  provide a useful visualization of the dense subgraph structure.
The \notredame{} has a million edges, and it is not possible to see the graph as a whole. But
the forest of nuclei breaks it down into meaningful parts, which can be visually inspected.
The overall forest is large (about $2000$ nuclei), but the nesting structure makes it easy
to absorb. We have not presented the results here, but even the \wiki{} graph of 38 million edges has
about a forest of only $4000$ nuclei (which we were able to easily visualize by a drawing tool).

Other choices of $r,s$ for the nuclei do not lead to much branching. We present
all nucleus trees for $r < s \leq 4$ for the \fb~graph in \Fig{tree1} (except $(3,4)$ which is given in 
\Fig{fb-forest}). Clearly, when $r=1$,
the nucleus decomposition is boring. For $r=2$, some structure arises, but not as dramatic
of \Fig{fb-forest}. Results vary over graphs, but for $r=1$,
there is pretty much just a chain of nuclei. For $r=2$, some graphs show more branching, but we consistently
see that for $(3,4)$-nuclei, the forest of nuclei is always branched.

\subsection{Dense subgraph discovery} \label{sec:discovery}

We plot the density histograms of the $(3,4)$-nuclei for various graphs in \Fig{density}. The $x$-axis
is (binned) density and the $y$-axis is the number of nuclei (all at least $10$ vertices) 
with that density. It can be clearly observed that we find many non-trivial dense subgraphs. It is surprising to see how many near cliques (density > 0.9)
we find. We tend to find more subgraphs of high density, and other than the \notredame{} graph,
the mass of the histogram is shifted to the right. The number of subgraphs of density at least $0.5$ 
is in the order of hundreds (and more than a thousand for \wiki).

An alternate presentation of the dense subgraphs is a scatter plot of all $(3,4)$-nuclei
with size in vertices versus density. This is given in \Fig{fb-comp} and \Fig{scatter},
where the red dots correspond to the nuclei. We see that dense subgraphs are obtained in all
scales of size, which is an extremely important feature. Nuclei capture more than just 
the densest (or high density) subgraphs, but find large sets of lower density (say around $0.2$).
Note that $0.2$ is a significant density for sets of hundreds of vertices.

\begin{figure*}[!t]
  \centering
  \subfloat[\fb]{\includegraphics[width=0.23\textwidth,keepaspectratio]{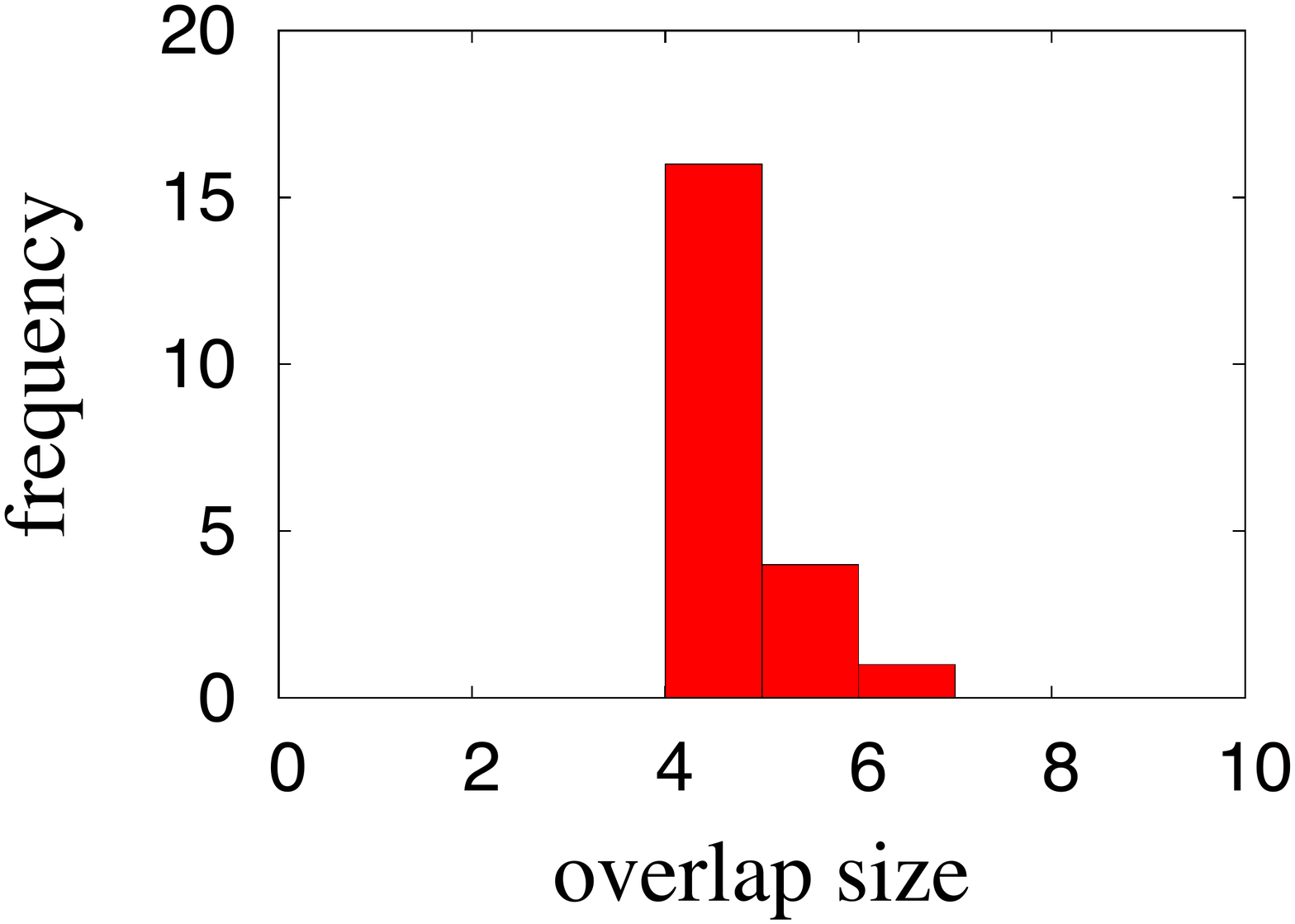}\label{fig:fb-overlap} } \ 
  \subfloat[\epinion]{\includegraphics[width=0.23\textwidth,keepaspectratio]{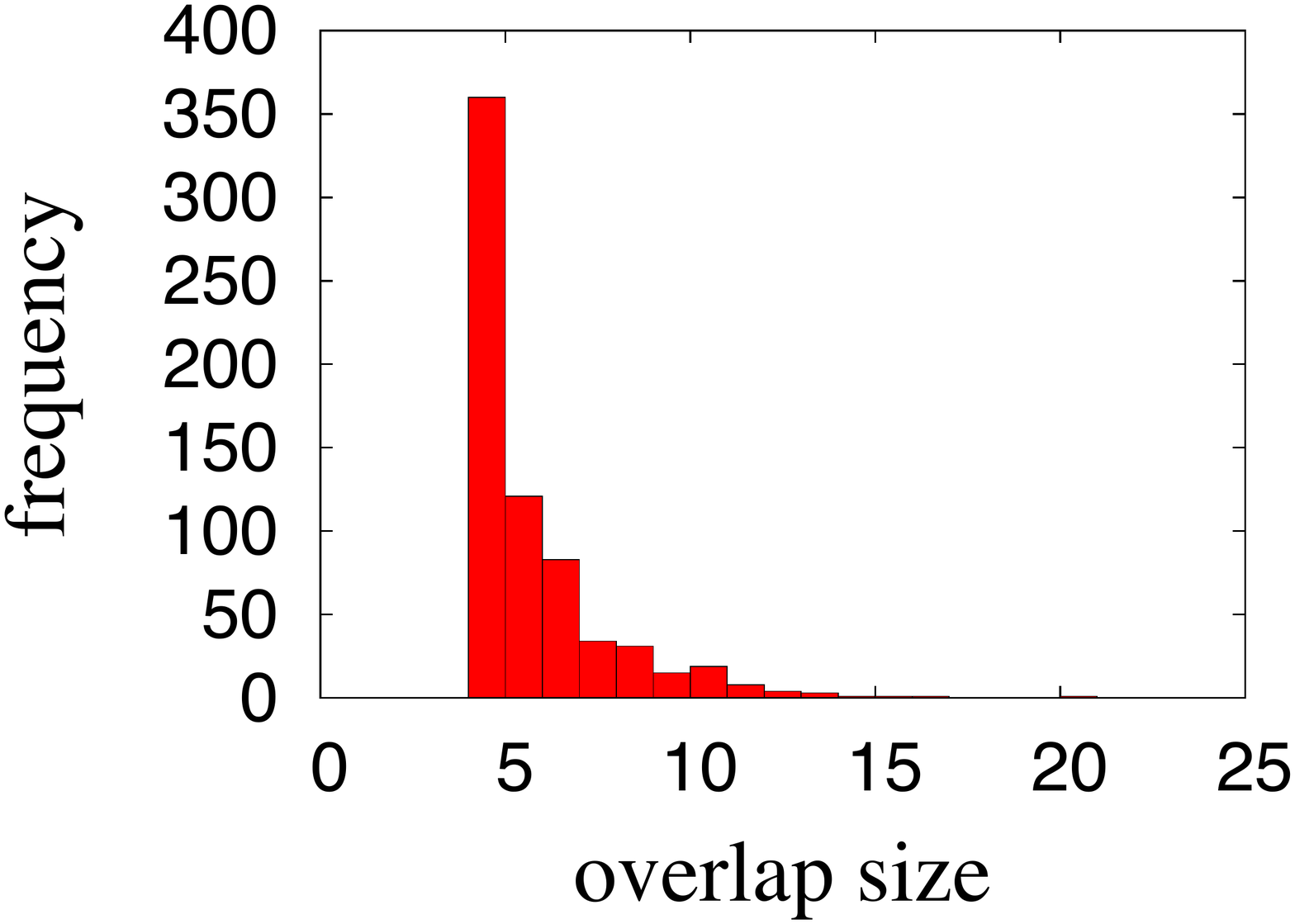}\label{fig:epinion-overlap} } \ 
  \subfloat[\notredame]{\includegraphics[width=0.23\textwidth,keepaspectratio]{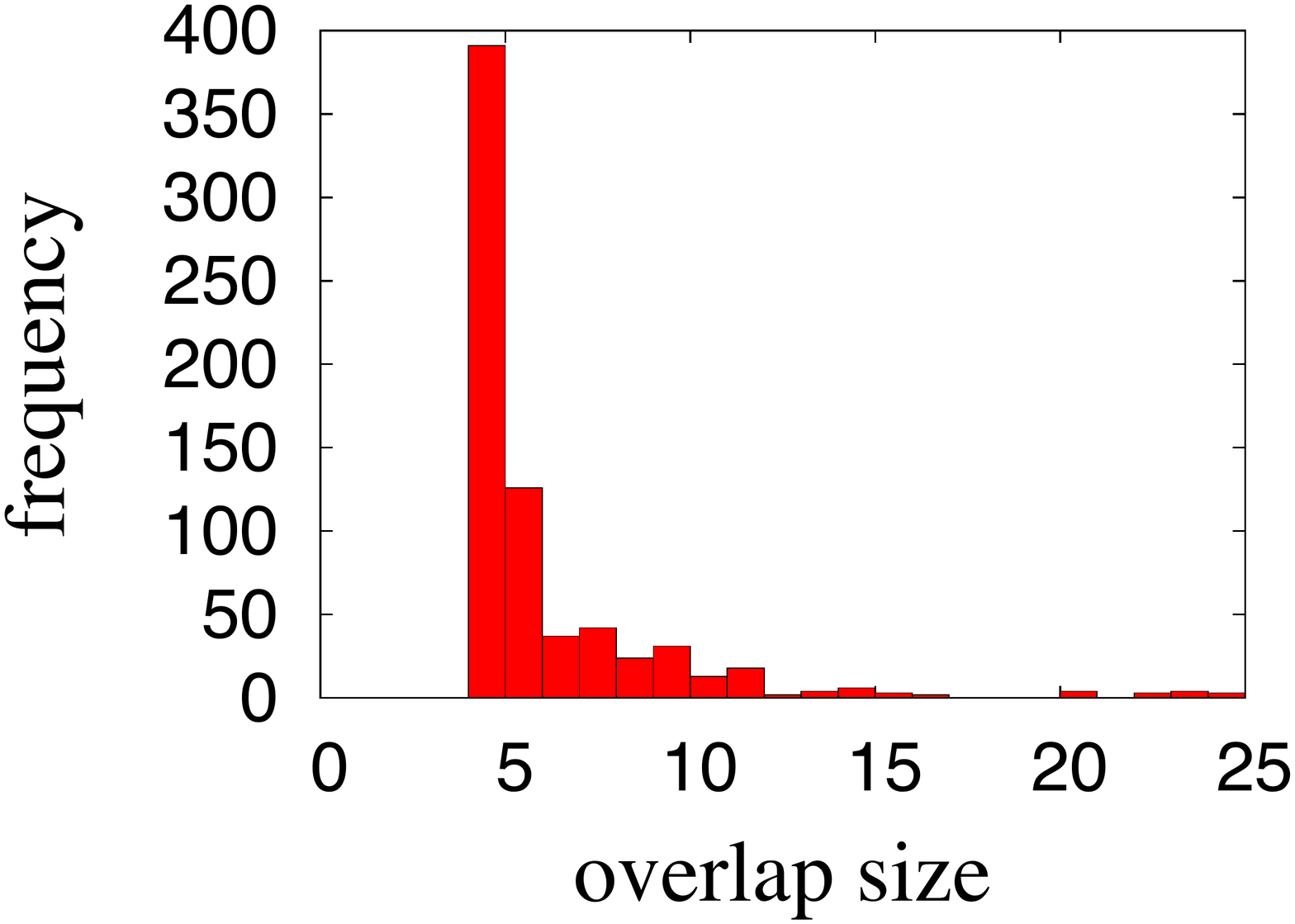}\label{fig:nd-overlap} } \ 
  \subfloat[\wiki]{\includegraphics[width=0.23\textwidth,keepaspectratio]{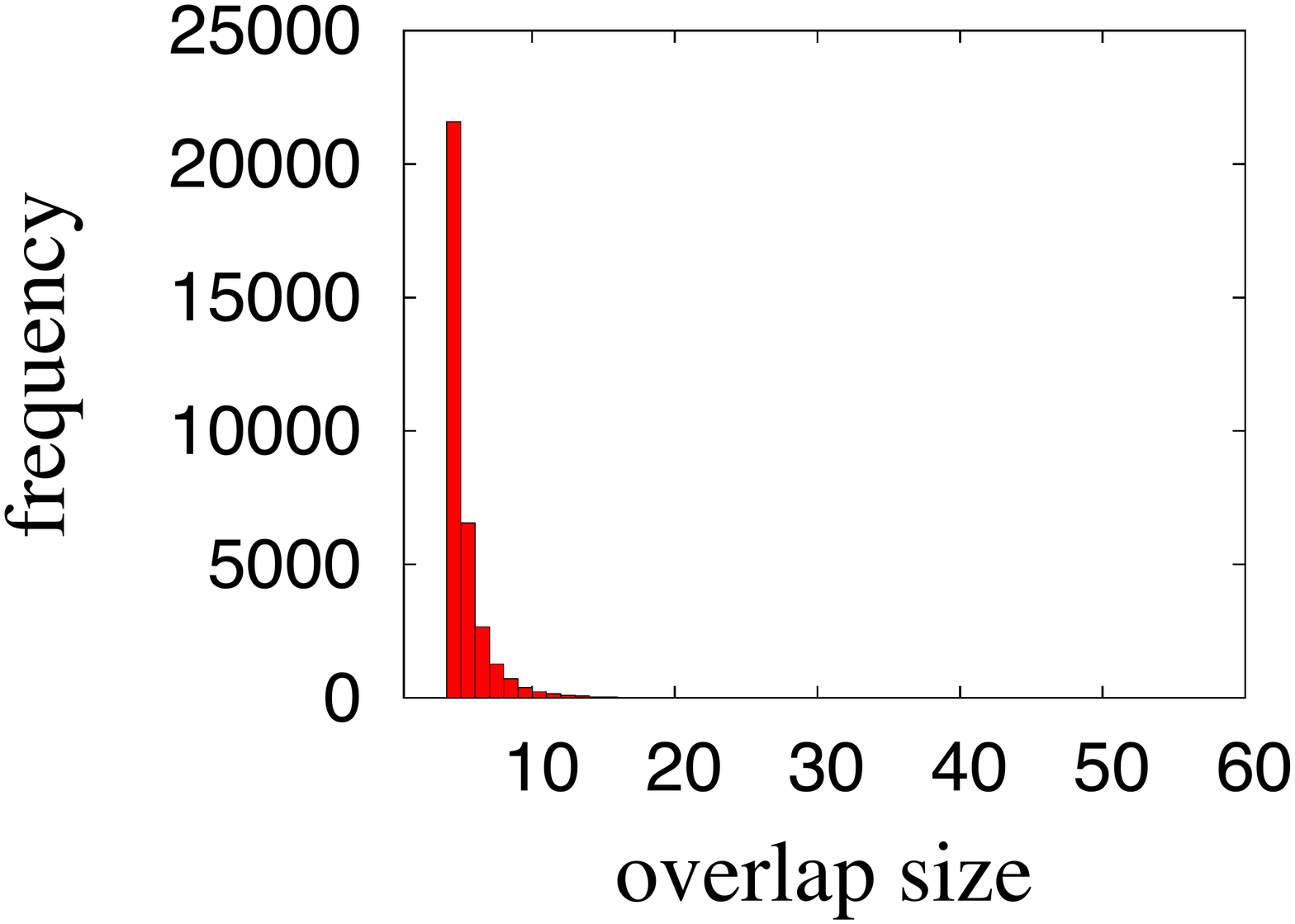}\label{fig:wiki-overlap} } \ 
   \caption{Histograms over non-trivial overlaps for $(3,4)$-nuclei. Child-ancestor intersections
   are omitted. Overlap size is in terms of the number of vertices. Most overlaps are small in size.
   We also observe that $(2,s)$-nuclei give almost no overlaps.\label{fig:overlap}}
\end{figure*}

\begin{figure*}[!t]
  \centering
  \subfloat[\fb]{\includegraphics[width=0.23\textwidth,keepaspectratio]{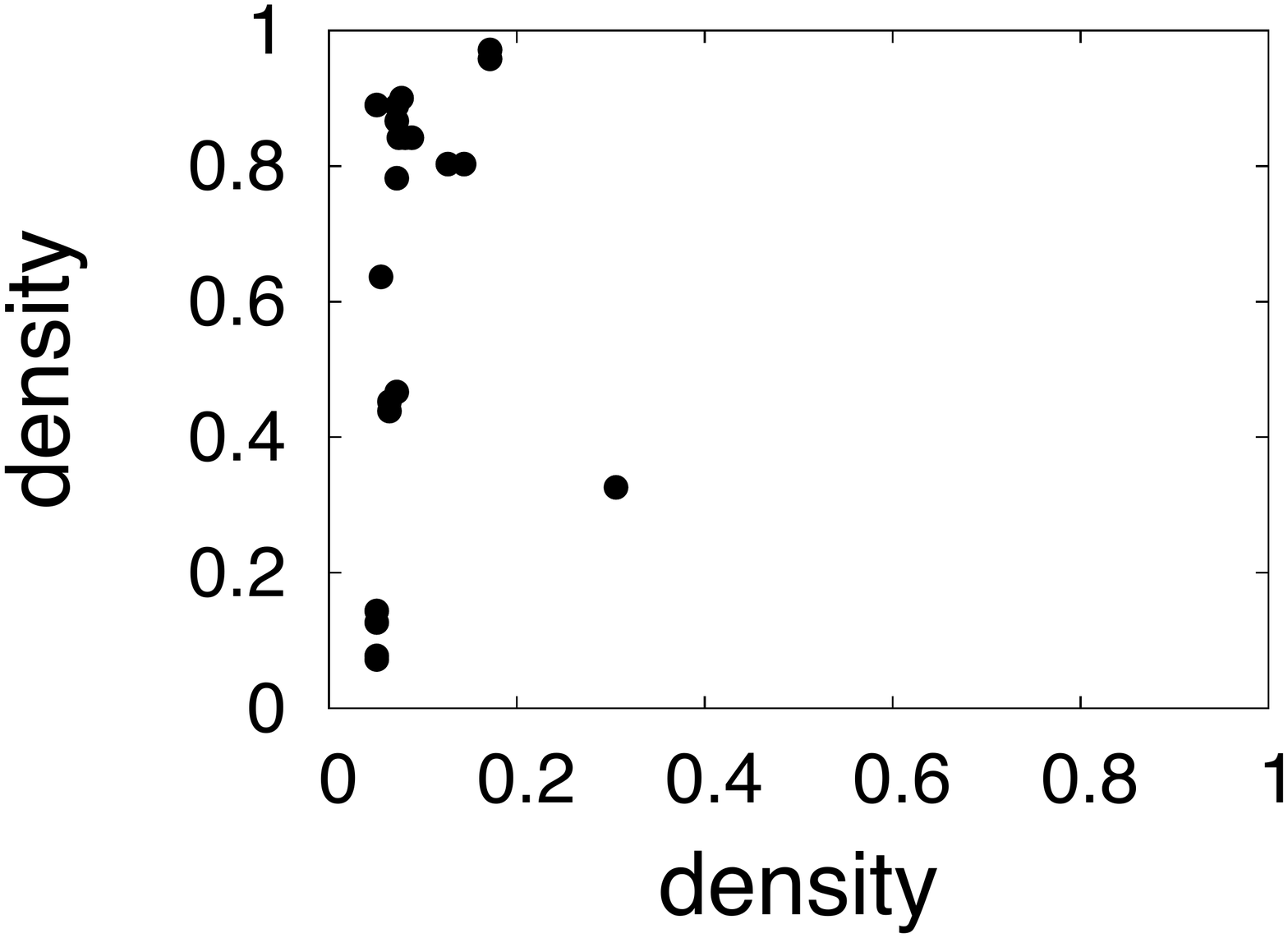}\label{fig:fb-over-scatter} } \ 
  \subfloat[\epinion]{\includegraphics[width=0.23\textwidth,keepaspectratio]{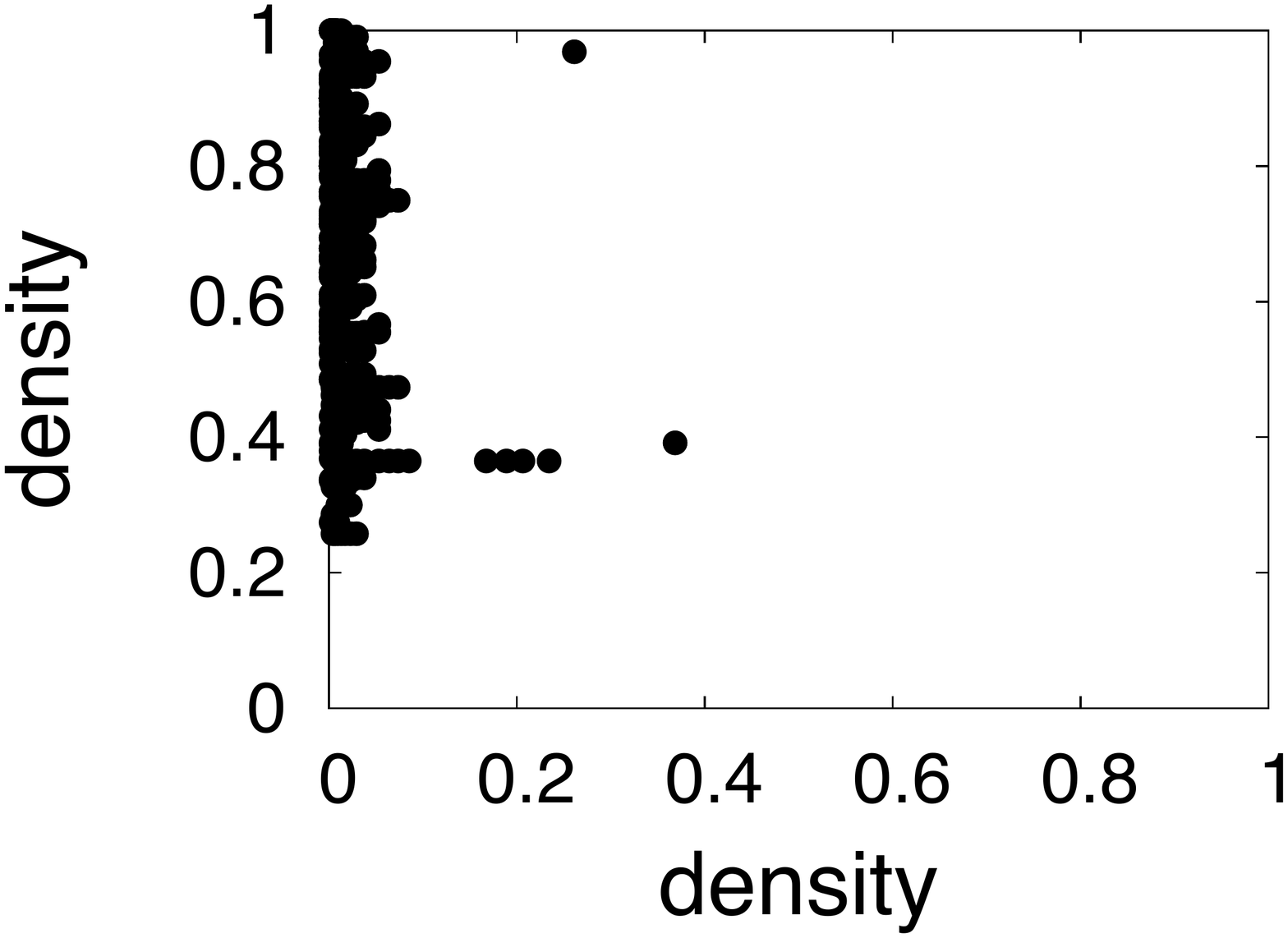}\label{fig:epinion-over-scatter} } \ 
  \subfloat[\notredame]{\includegraphics[width=0.23\textwidth,keepaspectratio]{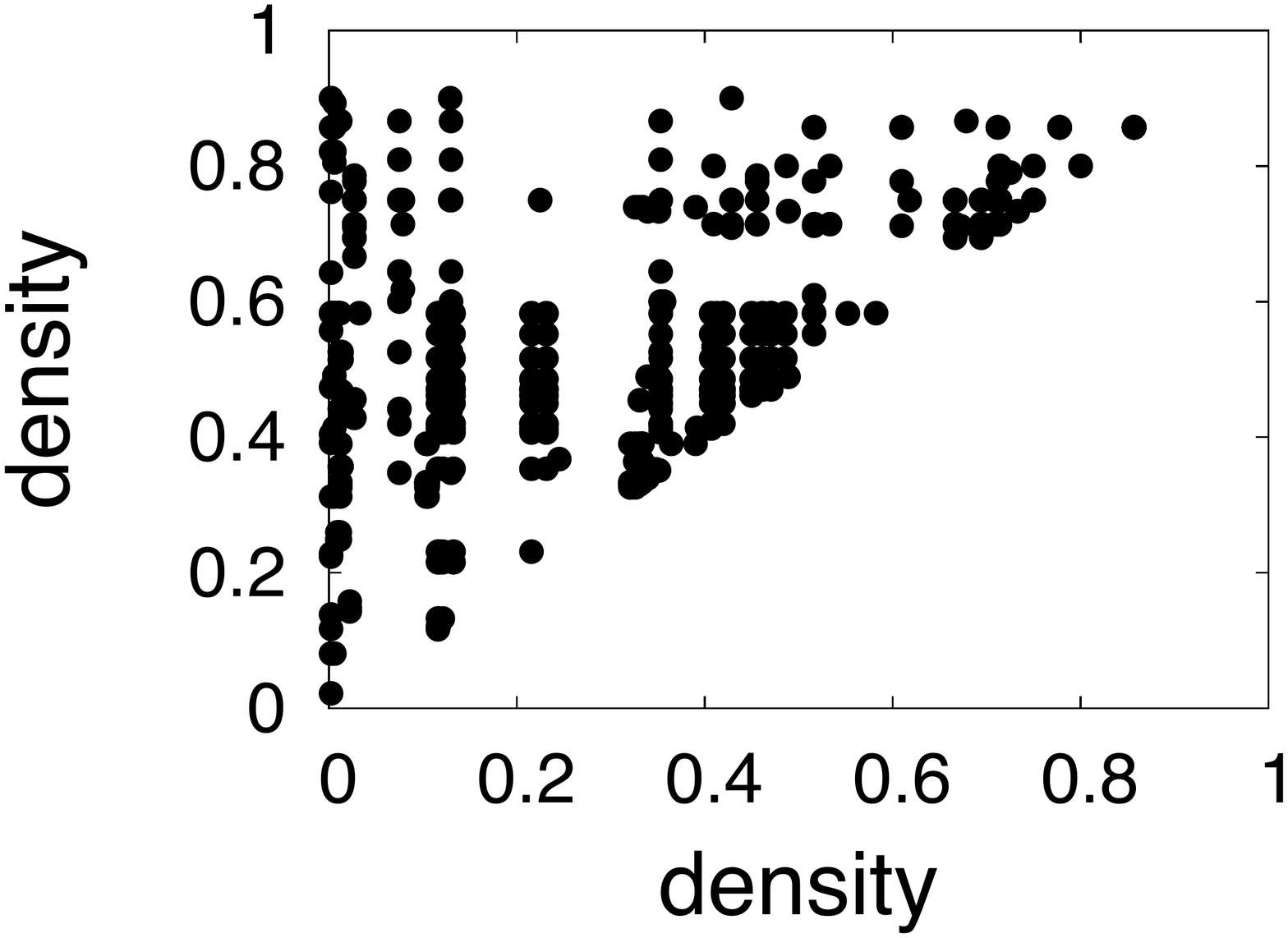}\label{fig:nd-over-scatter} } \ 
  \subfloat[\wiki]{\includegraphics[width=0.23\textwidth,keepaspectratio]{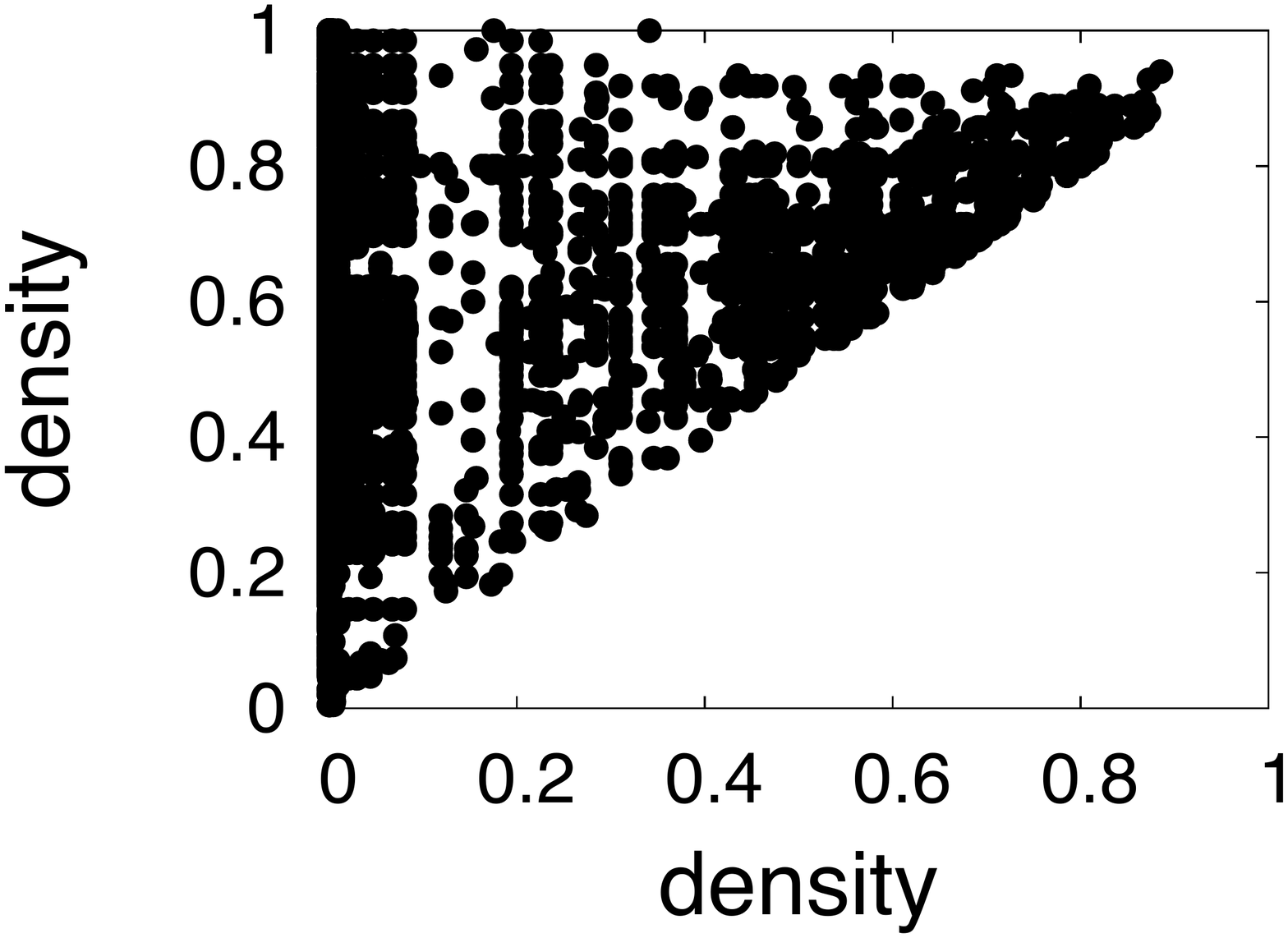}\label{fig:wiki-over-scatter} } \ 
   \caption{Overlap scatter plots for $(3,4)$-nuclei. Each axis shows the edge density of a participating
   nucleus in the pair-wise overlap. Larger density is shown on the $y$-axis. $(3,4)$-nuclei is able to get 
   overlaps between very dense subgraphs, especially in \notredame{} and \wiki. In \wiki{} graph, there are
   $1424$ instances of pair-wise overlap between two nuclei, where each nucleus has the density of at least $0.8$.
   \label{fig:over-scatter}}
\end{figure*}

\subsubsection{Comparisons with previous art} \label{sec:comp}

How does the quality of dense subgraphs found compare to the state-of-the-art?
In the scatter plots of \Fig{fb-comp} and \Fig{scatter}, we also 
show the output of two algorithms of~\cite{Tsourakakis13} in green and blue. 
The idea of~\cite{Tsourakakis13} is to approximate \emph{quasi-cliques},
and their result provides two every elegant algorithms for this process. (We collectively 
refer to them as OQC.) OQC algorithms only give a single output,
so we performed multiple runs to get many dense subgraphs. This is consistent with what
was done in~\cite{Tsourakakis13}. OQC algorithms clearly beat previous heuristics and
it is fair to say that~\cite{Tsourakakis13} is the state-of-the-art.

The $(3,4)$-nucleus decomposition does take significantly longer than the algorithms of~\cite{Tsourakakis13}.
But we always get much denser subgraphs in all runs. Moreover, the sizes are comparable
if not larger than the output of~\cite{Tsourakakis13}. Surprisingly, in \fb{} and \epinion,
some of the best outputs of OQC are very close to $(3,4)$-nuclei.
Arguably, the $(3,4)$-nuclei perform worst on \wiki, where OQC
find some larger and denser instances than $(3,4)$-nuclei. Nonetheless, the smaller $(3,4)$-nuclei
are significantly denser. We almost always can find fairly large cliques.

In \Tab{properties}, we consider the OQC output vs $(3,4)$-nuclei for all graphs.
Barring 4 instances, there is a $(3,4)$-nucleus that is larger and denser
than the OQC output. In all cases but one (adjnoun), there is a $(3,4)$-nucleus of density (of non-trivial size) higher than the 
the OQC output. The nuclei have the advantage of being the output of a fixed, deterministic procedure, and not
a heuristic that may give different outputs on different runs. We mention that OQC algorithms have a significant running time advantage
over finding $(3,4)$-nuclei, for a single subgraph finding.

\subsection{Overlapping nuclei} \label{sec:overlap}

A critical aspect of nuclei is that they can overlap. Grappling with overlap
is a major challenge when dealing with graph decompositions. We believe
one of the benefits of nuclei is that they naturally allow for (restricted) overlap.
As mentioned earlier, no two $(r,s)$-nuclei can contain the same $K_r$.
This is a significant benefit of setting $r=3,s=4$ over other choices.

In \Fig{overlap}, we plot the histogram over non-trivial overlaps for $(3,4)$-nuclei. 
(We naturally do not consider a child nucleus intersecting with an ancestor.)
For a given overlap size in vertices, the frequency is the number of pairs
of $(3,4)$-nuclei with that overlap. This is shown
for four different graphs. The total number of pair-wise overlaps (the sum of frequencies)
is typically around half the total number of $(3,4)$-nuclei. We observed
that the Jaccard similarities are less than $0.1$ (usually smaller).
This suggest that we have large nuclei with some overlap. 

There are bioinformatics applications for finding vertices that are present in numerous dense subgraphs~
\cite{Hu05}.
The $(3,4)$-nuclei provide many such vertices. In \Fig{over-scatter}, we give a scatter plot of all intersecting
nuclei, where nuclei are indexed by density. For two intersecting nuclei of density 
$\alpha > \beta$, we put a point $(\alpha,\beta)$. We only plot pairs where the overlap
is at least $5$ vertices. Especially for \notredame{} and \wiki, we get significant overlaps between dense 
clusters.

In contrast, for all other settings of $r,s$, we get almost no overlap. When $r = 2$,
nuclei can only overlap at vertices, and this is too stringent to allow for interesting
overlap.\\\\

\subsection{Runtime results}\label{sec:runtime}

\Tab{properties} presents the runtimes in seconds for the entire construction.
To provide some context, we describe runtimes for varying choices of $r,s$.
For $r=1, s=2$ ($k$-cores), the decomposition is linear and extremely fast.
For the largest graph (\wiki) we have, with $39M$ edges,
it takes only $4.26$ seconds. For $r=2, s=3$ (trusses), the time can be two orders
of magnitude higher. And for $(3,4)$-nuclei, it is an additional order of magnitude higher.
Nonetheless, our most expensive run took less than an hour on the \wiki{} graph,
and the final decomposition is quite insightful. It provides about 6000 nuclei with more
than 10 vertices, most of them of have density of at least $0.4$. The algorithms of~\cite{Tsourakakis13} take 
roughly a minute for \wiki{} to produce \emph{only one} dense subgraph.

The theoretical running time analysis of \Thm{slow} gives a running time bound of $\sum_v c_3(v) d(v)$.
In \Tab{properties}, we show this value for the various graphs. In general, we note that
this value roughly correlates with the running time. For graphs where the running time is in many minutes,
this quantity is always in the billions. For the large wiki graphs where the $(3,4)$-nucleus decomposition
is most expensive, this is in the trillions.

\section{Further directions}\label{sec:future}

The most important direction is in the applications of nucleus decompositions.
We are currently investigating bioinformatics applications, specifically protein-protein and
protein-gene interaction networks. Biologists often want a global view of the dense substructures, and we believe
the $(3,4)$-nuclei could be extremely useful here. In our preliminary analyses, we wish to see if the nuclei
pick out specific functional units. If so, that would provide strong validation of dense subgraph analyses
for bioinformatics.

It is natural to try even larger values of $r, s$. Preliminary experimentation suggested that this gave little
benefit in either the forest or the density of nuclei. Also, the cost of clique enumeration becomes
forbiddingly large. It would be nice to argue that $r=3, s=4$ is a sort of sweet spot for nucleus decompositions.
Previous theoretical work suggests that any graph with a sufficient triangle count undergoes special 
``community-like" decompositions~\cite{GuRoSe14}. That might provide evidence to why triangle based nuclei
are enough.

A faster algorithm for the $(3,4)$-nuclei is desirable. Clique enumeration is a well-studied problem~\cite{Bron73},
and we hope techniques from these results may provide ideas here. Of course, as we said earlier, any method
based on storing 4-cliques is infeasible (space-wise). We hope to devise a clever algorithm or data structure that quickly
determines the 4-cliques that a triangle participates in.

Last but not least, we seek for incremental algorithms to maintain the $(r,s)$-nuclei for a stream of edges.
There are existing techniques for streaming $k$-core algorithms~\cite{Sariyuce13-VLDB} and we believe that similar 
methods can be adapted for $(r,s)$-nuclei maintenance.

\section{Acknowledgements}

We are grateful to Charalampos Tsourakakis for sharing his code base for~\cite{Tsourakakis13}.
This work was funded by the DARPA GRAPHS program. Sandia National Laboratories is a multi-program laboratory managed and operated by Sandia Corporation, 
a wholly owned subsidiary of Lockheed Martin Corporation, for the  U.S. Department of Energy's National Nuclear Security  Administration under contract DE-AC04-94AL85000.

\bibliographystyle{abbrv}
\bibliography{paper,sesh}  

\begin{thebibliography}{10}

\bibitem{UF}
University of florida sparse matrix collection.
\newblock \url{http://www.cise.ufl.edu/research/sparse/matrices/}.

\bibitem{Alvarez06}
J.~I. Alvarez-Hamelin, A.~Barrat, and A.~Vespignani.
\newblock Large scale networks fingerprinting and visualization using the
  k-core decomposition.
\newblock In {\em Advances in Neural Information Processing Systems 18}, pages
  41--50, 2006.

\bibitem{AnCh09}
R.~Andersen and K.~Chellapilla.
\newblock Finding dense subgraphs with size bounds.
\newblock In {\em Workshop on Algorithms and Models for the Web-Graph (WAW)},
  pages 25--37, 2009.

\bibitem{Angel12}
A.~Angel, N.~Sarkas, N.~Koudas, and D.~Srivastava.
\newblock Dense subgraph maintenance under streaming edge weight updates for
  real-time story identification.
\newblock {\em Proc. VLDB Endow.}, 5(6):574--585, Feb. 2012.

\bibitem{Asashiro00}
Y.~Asahiro, K.~Iwama, H.~Tamaki, and T.~Tokuyama.
\newblock Greedily finding a dense subgraph.
\newblock {\em J. Algorithms}, 34(2):203--221, Feb. 2000.

\bibitem{BeCo+03}
D.~J. Beal, R.~Cohen, M.~J. Burke, and C.~L. McLendon.
\newblock Cohesion and performance in groups: A meta-analytic clarification of
  construct relation.
\newblock {\em Journal of Applied Psychology}, 88:989--1004, 2003.

\bibitem{Bron73}
C.~Bron and J.~Kerbosch.
\newblock Algorithm 457: Finding all cliques of an undirected graph.
\newblock {\em Commun. ACM}, 16(9):575--577, Sept. 1973.

\bibitem{Buehrer08}
G.~Buehrer and K.~Chellapilla.
\newblock A scalable pattern mining approach to web graph compression with
  communities.
\newblock In {\em Proc. of the 2008 International Conference on Web Search and
  Data Mining}, WSDM '08, pages 95--106, 2008.

\bibitem{Charikar00}
M.~Charikar.
\newblock Greedy approximation algorithms for finding dense components in a
  graph.
\newblock In {\em Proc. of the Third International Workshop on Approximation
  Algorithms for Combinatorial Optimization}, APPROX '00, pages 84--95, 2000.

\bibitem{ChNi85}
N.~Chiba and T.~Nishizeki.
\newblock Arboricity and subgraph listing algorithms.
\newblock {\em SIAM J. Comput.}, 14:210--223, February 1985.

\bibitem{ktruss}
J.~Cohen.
\newblock Trusses: Cohesive subgraphs for social network analysis.
\newblock National Security Agency Technical Report, 2008.

\bibitem{Co09}
J.~Cohen.
\newblock Graph twiddling in a {MapReduce} world.
\newblock {\em Computing in Science \& Engineering}, 11:29--41, 2009.

\bibitem{Dourisboure07}
Y.~Dourisboure, F.~Geraci, and M.~Pellegrini.
\newblock Extraction and classification of dense communities in the web.
\newblock In {\em Proc. of the 16th International Conference on World Wide
  Web}, WWW '07, pages 461--470, 2007.

\bibitem{Du09}
X.~Du, R.~Jin, L.~Ding, V.~E. Lee, and J.~H.~T. Jr.
\newblock Migration motif: a spatial - temporal pattern mining approach for
  financial markets.
\newblock In {\em Proc. of the 15th ACM SIGKDD International Conference on
  Knowledge Discovery and Data Mining}, KDD '09, 2009.

\bibitem{ErHa66}
P.~Erd\H{o}s and A.~Hajnal.
\newblock On chromatic number of graphs and set-systems.
\newblock {\em Acta Mathematica Hungarica}, 17:61--99, 1966.

\bibitem{Fe02}
U.~Feige.
\newblock Relations between average case complexity and approximation
  complexity.
\newblock In {\em Proceedings of Symposium on Theory of Computing}, pages
  534--543, 2002.

\bibitem{Fo10}
D.~R. Forsyth.
\newblock {\em Group Dynamics}.
\newblock Cengage Learning, 2010.

\bibitem{Fratkin06}
E.~Fratkin, B.~T. Naughton, D.~L. Brutlag, and S.~Batzoglou.
\newblock Motifcut: regulatory motifs finding with maximum density subgraphs.
\newblock In {\em ISMB (Supplement of Bioinformatics)}, pages 156--157, 2006.

\bibitem{Gallo89}
G.~Gallo, M.~D. Grigoriadis, and R.~E. Tarjan.
\newblock A fast parametric maximum flow algorithm and applications.
\newblock {\em SIAM J. Comput.}, 18(1):30--55, Feb. 1989.

\bibitem{Gibson05}
D.~Gibson, R.~Kumar, and A.~Tomkins.
\newblock Discovering large dense subgraphs in massive graphs.
\newblock In {\em Proc. of the 31st International Conference on Very Large Data
  Bases}, VLDB '05, pages 721--732, 2005.

\bibitem{Gionis13}
A.~Gionis, F.~Junqueira, V.~Leroy, M.~Serafini, and I.~Weber.
\newblock Piggybacking on social networks.
\newblock {\em Proc. VLDB Endow.}, 6(6):409--420, 2013.

\bibitem{Goldberg84}
A.~V. Goldberg.
\newblock Finding a maximum density subgraph.
\newblock Technical report, Berkeley, CA, USA, 1984.

\bibitem{GuRoSe14}
R.~Gupta, T.~Roughgarden, and C.~Seshadhri.
\newblock Decompositions of triangle-dense graphs.
\newblock In {\em Innovations in Theoretical Computer Science (ITCS)}, pages
  471--482, 2014.

\bibitem{Hastad96}
J.~H\r{a}stad.
\newblock Clique is hard to approximate within $n^{(1-\epsilon)}$.
\newblock In {\em Acta Mathematica}, pages 627--636, 1996.

\bibitem{Hu05}
H.~Hu, X.~Yan, Y.~Huang, J.~Han, and X.~J. Zhou.
\newblock Mining coherent dense subgraphs across massive biological networks
  for functional discovery.
\newblock {\em Bioinformatics}, 21(1):213--221, Jan. 2005.

\bibitem{Iasemidis03}
L.~Iasemidis, D.-S. Shiau, W.~Chaovalitwongse, J.~Sackellares, P.~Pardalos,
  J.~Principe, P.~Carney, A.~Prasad, B.~Veeramani, and K.~Tsakalis.
\newblock Adaptive epileptic seizure prediction system.
\newblock {\em Biomedical Engineering, IEEE Transactions on}, 50:616--627,
  2003.

\bibitem{Jin09}
R.~Jin, Y.~Xiang, N.~Ruan, and D.~Fuhry.
\newblock 3-hop: a high-compression indexing scheme for reachability query.
\newblock In {\em SIGMOD Conference}, pages 813--826, 2009.

\bibitem{Kh06}
S.~Khot.
\newblock Ruling out ptas for graph min-bisection, dense k-subgraph, and
  bipartite clique.
\newblock {\em SIAM Journal on Computing}, 36(4):1025--1071, 2006.

\bibitem{Kumar99}
R.~Kumar, P.~Raghavan, S.~Rajagopalan, and A.~Tomkins.
\newblock Trawling the web for emerging cyber-communities.
\newblock In {\em Proc. of the Eighth International Conference on World Wide
  Web}, WWW '99, pages 1481--1493, 1999.

\bibitem{Lee10}
V.~E. Lee, N.~Ruan, R.~Jin, and C.~C. Aggarwal.
\newblock A survey of algorithms for dense subgraph discovery.
\newblock In C.~C. Aggarwal and H.~Wang, editors, {\em Managing and Mining
  Graph Data}, volume~40. Springer, 2010.

\bibitem{LiWh70}
D.~Lick and A.~White.
\newblock k-degenerate graphs.
\newblock {\em Canadian Journal of Mathematics}, 22:1082--1096, 1970.

\bibitem{MaBe83}
D.~Matula and L.~Beck.
\newblock Smallest-last ordering and clustering and graph coloring algorithms.
\newblock {\em Journal of ACM}, 30(3):417--427, 1983.

\bibitem{kclub}
R.~Mokken.
\newblock Cliques, clubs and clans.
\newblock {\em Quality and Quantity}, 13(2):161--173, 1979.

\bibitem{Rossi13}
R.~A. Rossi, D.~F. Gleich, A.~H. Gebremedhin, and M.~M.~A. Patwary.
\newblock A fast parallel maximum clique algorithm for large sparse graphs and
  temporal strong components.
\newblock {\em CoRR}, abs/1302.6256, 2013.

\bibitem{SaCaWiZa10}
A.~Sala, L.~Cao, C.~Wilson, R.~Zablit, H.~Zheng, and B.~Y. Zhao.
\newblock Measurement-calibrated graph models for social network experiments.
\newblock In {\em WWW '10}, pages 861--870. ACM, 2010.

\bibitem{Sariyuce13-VLDB}
A.~E. Sar{\i}y{\"{u}}ce, B.~Gedik, G.~Jacques-Silva, K.-L. Wu, and {\"{U}}.~V.
  {\c{C}}ataly{\"{u}}rek.
\newblock Streaming algorithms for k-core decomposition.
\newblock In {\em 39th International Conference on Very Large Data Bases
  (VLDB)}, Aug 2013.

\bibitem{ScWa05}
T.~Schank and D.~Wagner.
\newblock Finding, counting and listing all triangles in large graphs, an
  experimental study.
\newblock In {\em Experimental and Efficient Algorithms}, pages 606--609.
  Springer Berlin / Heidelberg, 2005.

\bibitem{kcore}
S.~B. Seidman.
\newblock Network structure and minimum degree.
\newblock {\em Social Networks}, 5(3):269--287, 1983.

\bibitem{kplex}
S.~B. Seidman and B.~Foster.
\newblock A graph-theoretic generalization of the clique concept.
\newblock {\em Journal of Mathematical Sociology}, 1978.

\bibitem{SePiKo13-tri-j}
C.~Seshadhri, A.~Pinar, and T.~G. Kolda.
\newblock Triadic measures on graphs: The power of wedge sampling.
\newblock {\em Statistical Analysis and Data Mining}, 7(4):294--307, 2014.

\bibitem{snap}
SNAP.
\newblock Stanford network analysis package.
\newblock http://snap.stanford.edu/snap, retrieved March, 2014.

\bibitem{SuVa11}
S.~Suri and S.~Vassilvitskii.
\newblock Counting triangles and the curse of the last reducer.
\newblock In {\em WWW'11}, pages 607--614, 2011.

\bibitem{Tsourakakis13}
C.~Tsourakakis, F.~Bonchi, A.~Gionis, F.~Gullo, and M.~Tsiarli.
\newblock Denser than the densest subgraph: Extracting optimal quasi-cliques
  with quality guarantees.
\newblock In {\em Proc. of the 19th ACM SIGKDD International Conference on
  Knowledge Discovery and Data Mining}, KDD '13, 2013.

\bibitem{Tsourakakis14}
C.~E. Tsourakakis.
\newblock A novel approach to finding near-cliques: The triangle-densest
  subgraph problem.
\newblock {\em CoRR}, abs/1405.1477, 2014.

\bibitem{WaCh12}
J.~Wang and J.~Cheng.
\newblock Truss decomposition in massive networks.
\newblock {\em Proceedings of the VLDB Endowment}, 5(9):812--823, 2012.

\bibitem{Wang10}
N.~Wang, J.~Zhang, K.-L. Tan, and A.~K.~H. Tung.
\newblock On triangulation-based dense neighborhood graph discovery.
\newblock {\em Proc. VLDB Endow.}, 4:58--68, 2010.

\bibitem{WaFa94}
S.~Wasserman and K.~Faust.
\newblock {\em Social Network Analysis: Methods and Applications}.
\newblock Cambridge University Press, 1994.

\bibitem{WaSt98}
D.~Watts and S.~Strogatz.
\newblock Collective dynamics of `small-world' networks.
\newblock {\em Nature}, 393:440--442, 1998.

\bibitem{Zhang05}
B.~Zhang and S.~Horvath.
\newblock A general framework for weighted gene co-expression network analysis.
\newblock {\em Statistical Applications in Genetics and Molecular Biology},
  4(1):Article 17+, 2005.

\bibitem{Zhang12}
Y.~Zhang and S.~Parthasarathy.
\newblock Extracting analyzing and visualizing triangle k-core motifs within
  networks.
\newblock In {\em Proc. of the 2012 IEEE 28th International Conference on Data
  Engineering}, ICDE '12, pages 1049--1060, 2012.

\bibitem{Zhao13}
F.~Zhao and A.~K.~H. Tung.
\newblock Large scale cohesive subgraphs discovery for social network visual
  analysis.
\newblock In {\em Proc. of the 39th international conference on Very Large Data
  Bases}, PVLDB'13, pages 85--96, 2013.

\end{thebibliography}

\end{document}